\title{Robust testing in generalized linear models by sign-flipping score contributions\footnote{To appear in \emph{Journal of the Royal Statistical Society: Series B (Statistical Methodology)}.}}

\documentclass[11pt]{article}

\usepackage{amsmath}
\usepackage{amsfonts}
\usepackage{bm}
\usepackage{bbm}
\usepackage{amsthm}
\usepackage{comment}
\usepackage[nottoc]{tocbibind}
\usepackage{graphicx}
\usepackage{subcaption}
\usepackage{changepage}
\usepackage{natbib}
\usepackage{abstract}
\usepackage[colorlinks=true,citecolor=blue,runcolor=blue,linkcolor=blue, pdfborder={0 0 0}]{hyperref}

\usepackage{tikz}
\usepackage{pgfplots}
\usepackage{appendix}

\usepackage[margin=3.5cm]{geometry}

\definecolor{cobalt}{rgb}{0.0, 0.28, 0.67}
\definecolor{darkblue}{rgb}{0.0, 0.0, 0.55}
\definecolor{dgreen}{rgb}{0.0, 0.4, 0.0}

\theoremstyle{plain}
\newtheorem{theorem}{Theorem}
\newtheorem{corollary}[theorem]{Corollary}
\newtheorem{lemma}[theorem]{Lemma}
\newtheorem{proposition}[theorem]{Proposition}

\theoremstyle{definition}

\newtheorem{assumption}{Assumption}

\newcommand{\Ima}{\hat{\bm{ \mathcal{I} }}}
\newcommand{\Imag}{\bm{ \mathcal{I} }}
\newcommand{\Imat}{\tilde{\bm{ \mathcal{I} }}}
\newcommand{\fatS}{\bm{S}}
\newcommand{\fnu}{\bm{\nu}}
\newcommand{\fgamma}{\bm{\gamma}}
\newcommand{\fbeta}{\bm{\beta}}

\newcommand{\black}{\color{\black}}
\newcommand{\lp}{\eta} 

  \author{Jesse Hemerik\footnote{Department of Biostatistics, Oslo Centre for Biostatistics and Epidemiology, University of Oslo} \footnote{Address for correspondence: Jesse Hemerik, Oslo Centre for Biostatistics and Epidemiology, P.O.Box 1122 Blindern,
0317 Oslo, Norway. e-mail: jesse.hemerik@medisin.uio.no}, 
Jelle Goeman\footnote{Biomedical Data Sciences, Leiden University Medical Center, the Netherlands}\phantom{.}
and 
    Livio Finos\footnote{Department of Developmental Psychology and Socialization, University of Padua, Italy}
       }

\sloppy

\begin{document}
\maketitle

\begin{abstract}
Generalized linear models are often misspecified due to overdispersion, heteroscedasticity and ignored nuisance variables. 
Existing quasi-likelihood methods for testing in misspecified models often do not provide satisfactory type-I error rate   control.
We provide a novel semi-parametric test, based on sign-flipping individual score contributions. The tested parameter is allowed to be multi-dimensional and even high-dimensional. 
Our test  is  often robust against  the mentioned forms of misspecification and provides better type-I error control than its competitors. 
When nuisance parameters are estimated, our basic test becomes conservative. We show how to take nuisance estimation into account to  obtain an asymptotically exact test. Our proposed test is asymptotically equivalent to its parametric counterpart. 
\\\\
\emph{Keywords:} GLM,  Heteroscedasticity, High-dimensional,  Permutation,  Robust, Score Test,   Semi-parametric, Sign-flipping.\\

\end{abstract}

\section{Introduction}

We consider the problem of  testing hypotheses about parameters in potentially misspecified generalized linear models (GLMs). The types of misspecification that we consider include overdispersion and heteroscedasticity.
When the model is misspecified, the traditional parametric tests tend to lose their properties,  for example because they estimate  the Fisher information under incorrect assumptions.  
By a parametric test we mean a test which fully relies on an assumed parametric model \citep{pesarin2015some} to compute the null distribution of the test statistic.

When a parametric model to be tested is potentially misspecified, the most obvious approach is to extend the model with more parameters, e.g. to add an overdispersion parameter. However, such approaches still require assumptions, for example that the overdispersion is constant. Hence a fully parametric approach is not always the best option.

Another well-known approach to testing in possibly misspecified GLMs is to use a Wald-type test, where a sandwich estimate of the variance of the coefficient estimate is used. The sandwich estimate corrects for the potentially misspecified variance. As long as the linear predictor and link are correct, such a test is asymptotically exact under mild assumptions.
 We call a test asymptotically exact if its  rejection probability is asymptotically known under the null hypothesis. 
 For small samples, however, sandwich estimates often perform poorly and the test can be very liberal \citep{boos1992on,freedman2006so,maas2004robustness,kauermann2000sandwich}.

Recent decades have seen an increase in the use of permutation approaches for various testing problems \citep{tusher2001significance,pesarin2001multivariate,chung2013exact,pauly2015asymptotic,winkler2016faster,hemerik2018exact, ganong2018permutation}. 
These methods are useful since they require few parametric assumptions.
Especially when multiple hypotheses are tested, permutation methods are often powerful since they can take into account the dependence structure in the data \citep{westfall1993resampling, hemerik2018false,hemerik2019permutation}. In the past, permutation methods have already been used to test in linear models \citep[][and references therein]{winkler2014permutation}. Rather than permutations, sometimes other transformations are used,  such as rotations \citep{solari2014rotation} and sign-flipping of residuals \citep{winkler2014permutation}.
The existing permutation tests for GLMs, however, are limited to models with identity link function.

Like some existing methods for testing in linear models, this paper presents a sign-flipping approach. Our approach is new however, since rather than flipping residuals, we flip individual score contributions (note that the score, the derivative of the log-likelihood, is a sum of $n$ individual score contributions). Moreover, we allow testing in a wide range of models, not only regression models with identity link.
Under mild assumptions, the only requirement for the test to be asymptotically exact, is that the individual score contributions have mean 0.
Consequently, if the link function is correct, our method is  often  robust against several types of model specification, such as arbitrary overdispersion, heteroscedasticity and, in some cases, ignored nuisance parameters.

The main reason for this robustness is that we do not require to estimate the variance of the score, the Fisher information. Rather, we perform a permutation-type test based on the score contributions, where rather than permutation, we use sign-flipping. Intuitively, the advantage of this  approach over explicitly estimating the variance, is the following: if the score contributions  are independent and perfectly symmetric around zero under the null, then our test is exact for small $n$, even if  the score contributions have misspecified variances  and shapes  \citep{pesarin2010permutation}. A parametric test, on the other hand, is then usually not exact.

In case nuisance  parameters  are estimated, the individual score contributions become dependent and our basic sign-flipping test is no longer asymptotically exact. 
To deal with this problem, we consider the \emph{effective score}, which is less dependent on the nuisance estimate than the basic score \citep{hall1990large,marohn2002comment}.
In this case we need slightly more assumptions: the variance misspecification is not always allowed to depend on the covariates.
The resulting test is  asymptotically exact. 

The methods in this paper have been implemented in the R package \emph{flipscores}, available on CRAN.

In Section \ref{knownn} we consider the scenario that no nuisance effects need to be estimated. In Section \ref{unknownn} we show how the estimation of nuisance effects can be taken into account.
 Section \ref{secmv} provides tests of hypotheses about parameters of more than one dimension. 
 Section \ref{simulations} contains simulations and Section \ref{data} an analysis of real data.

\section{Models with known nuisance parameters} \label{knownn}
Consider random variables $\nu_1,...,\nu_n$, which satisfy Assumption \ref{as:nu} below. These will often be individual score contributions (see Section \ref{unknownn},  \citeauthor{rao1948large}, \citeyear{rao1948large}, or \citeauthor{hall1990large},  \citeyear{hall1990large},    p. 86), but the results in Section \ref{secbasict}  hold for any random variables satisfying this assumption.

\begin{assumption} \label{as:nu}
  The random variables $\nu_i,$ $i\in \mathbb{N}$,  are independent of each other, have finite variances and satisfy the following.
For every $\epsilon>0$, $$\lim_{n\rightarrow\infty}  \frac{1}{n} \sum_{i=1}^{n}\mathbb{E}\Big(\nu_i^2\mathbbm{1}_{\{ \nu_i /\sqrt{n}>\epsilon  \}}\Big)=0.$$
Further,  as $n\rightarrow\infty$, $s_n^2:= \frac{1}{n}\sum_{i=1}^{n}var(\nu_i) \rightarrow s^2$ for some constant  $s^2> 0$. 

\end{assumption}

Throughout Section \ref{knownn}, we consider any null hypothesis $H_0$ which implies that $\mathbb{E}\nu_i=0$ for all $1\leq i \leq n$. 
If $\nu_1,...,\nu_n$ are score contributions and $H_0$ is a point hypothesis, then under mild  assumptions, $\mathbb{E}\nu_i=0$ is satisfied under $H_0$.

A key assumption throughout Section \ref{knownn} is that the $\nu_i$, $i\in \mathbb{N}$, are independent. As soon as nuisance parameters need to be estimated, however,  score contributions become dependent. Section \ref{unknownn} is devoted to dealing with estimated nuisance.

\subsection{Basic sign-flipping test} \label{secbasict}
Let  $\alpha\in[0,1)$.
For any $a\in \mathbb{R}$, let $\lceil a\rceil$ be the smallest integer which is larger than or equal to $a$ and let $\lfloor a\rfloor$ be the largest integer which is at most $a$. 
Given  values $T_1^n,...,T_w^n\in \mathbb{R}$, we let
 $T_{(1)}^n\leq...\leq T_{(w)}^n$ be the sorted values and write $T_{[1-\alpha]}^n=T_{(\lceil (1-\alpha)w \rceil)}^n$.

Throughout this paper, $w\in \{2,3,...\}$ denotes the number of random sign-flipping  transformations  to be used. Define $g_1=(1,...,1)\in \mathbb{R}^n$ and for every $2\leq j \leq w$ let $g_j=(g_{j1},...,g_{jn})$ be independent and uniformly distributed on $\{-1,1\}^n$. 
Throughout the rest of Section \ref{knownn}, for every $1\leq j \leq w$,  we let  
$$T_j^n= n^{-1/2} \sum_{i=1}^n g_{ji} \nu_i. $$
We now state that the basic sign-flipping test is asymptotically exact  for the point null hypothesis $H_0$ that implies $\mathbb{E}\nu_i=0$, $1\leq i \leq n$. All proofs are in the appendices.

\begin{theorem} \label{mainST}
Suppose that Assumption \ref{as:nu} holds.
Consider the test that rejects $H_0$ if and only if $T_1^n>T_{[1-\alpha]}^n$. 
Then, as $n\rightarrow \infty$, the  rejection probability  of this test converges to $\lfloor \alpha w \rfloor/w\leq \alpha$ under $H_0$. Moreover, the statistics $T_1^n,...,T_w^n$ are asymptotically  normal and independent with mean $0$ and common variance  $\lim_{n\rightarrow\infty}s^2_n$ under $H_0$.
\end{theorem}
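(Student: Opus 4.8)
The plan is to treat the whole vector $\mathbf{T}^n=(T_1^n,\dots,T_w^n)$ as a normalized sum of independent (non-identically distributed) random vectors, establish its joint asymptotic normality by a Lindeberg--Feller-type central limit theorem, and then read off the limiting rejection probability by an exchangeability argument.

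\emph{Vectorization and a multivariate CLT.} Writing $g_{1i}=1$, set $X_i=(g_{1i}\nu_i,\dots,g_{wi}\nu_i)\in\mathbb{R}^w$ for $1\le i\le n$, so that $\mathbf{T}^n=n^{-1/2}\sum_{i=1}^nX_i$. For a fixed $i$ the signs $g_{2i},\dots,g_{wi}$ are independent Rademacher variables independent of $\nu_i$, and for distinct $i$ the vectors $X_i$ are built from disjoint collections of the mutually independent variables $\{\nu_i\}\cup\{g_{ji}\}$, so $X_1,\dots,X_n$ are independent. Under $H_0$ we have $\mathbb{E}\nu_i=0$, hence $\mathbb{E}X_i=0$; a short computation using $g_{ji}^2=1$, $\mathbb{E}g_{ji}=0$ and independence gives $\operatorname{Cov}(X_i)=\operatorname{var}(\nu_i)\,I_w$, so that $\frac1n\sum_{i=1}^n\operatorname{Cov}(X_i)=s_n^2I_w\to s^2I_w$. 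Since $\|X_i\|^2=w\,\nu_i^2$, the Lindeberg condition $\frac1n\sum_{i=1}^n\mathbb{E}\bigl(\|X_i\|^2\mathbbm{1}_{\{\|X_i\|>\epsilon\sqrt n\}}\bigr)\to0$ for every $\epsilon>0$ is precisely the Lindeberg-type condition of Assumption~\ref{as:nu} (with $\epsilon/\sqrt w$ in place of $\epsilon$). The multivariate Lindeberg--Feller theorem therefore yields $\mathbf{T}^n\xrightarrow{d}N(0,s^2I_w)$ under $H_0$; in particular $T_1^n,\dots,T_w^n$ are asymptotically normal and, the limiting covariance being diagonal, asymptotically independent, with mean $0$ and common variance $s^2=\lim_ns_n^2$. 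This already proves the second assertion of the theorem. (Alternatively, one may avoid quoting the multivariate version by applying the ordinary Lindeberg CLT to $a^\top X_i$ for arbitrary $a\in\mathbb{R}^w$, using $|a^\top X_i|\le\bigl(\sum_j|a_j|\bigr)|\nu_i|$, and concluding with the Cram\'er--Wold device.)

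\emph{From the limit law to the rejection probability.} Let $m=\lceil(1-\alpha)w\rceil$, so that the test rejects $H_0$ exactly when $T_1^n>T_{(m)}^n$, and let $Z=(Z_1,\dots,Z_w)\sim N(0,s^2I_w)$ be the limit from the previous step; because $s^2>0$ the $Z_j$ are i.i.d.\ with a continuous law and hence a.s.\ all distinct. The function $h(t_1,\dots,t_w)=\mathbbm{1}\{t_1>t_{(m)}\}$ is bounded and is continuous at every point with pairwise distinct coordinates, so its discontinuity set is contained in the diagonal $\{t:t_i=t_j\text{ for some }i\ne j\}$, which has $Z$-probability zero. By the portmanteau theorem, $P_{H_0}(T_1^n>T_{(m)}^n)=\mathbb{E}\,h(\mathbf{T}^n)\to\mathbb{E}\,h(Z)=P(Z_1>Z_{(m)})$. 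Finally, by exchangeability of $Z$ and absence of ties the rank of $Z_1$ among $Z_1,\dots,Z_w$ is uniform on $\{1,\dots,w\}$, whence $P(Z_1>Z_{(m)})=(w-m)/w=\lfloor\alpha w\rfloor/w$, using the identity $w-\lceil(1-\alpha)w\rceil=\lfloor\alpha w\rfloor$; and $\lfloor\alpha w\rfloor/w\le\alpha$.

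\emph{Where the work is.} The two points that require care are the verification of the Lindeberg condition for the vectorized array (which is precisely why Assumption~\ref{as:nu} is stated in the form it is) and the transfer from $\mathbf{T}^n\xrightarrow{d}Z$ to convergence of the rejection probability: for the latter it is essential that the limiting vector has a continuous one-dimensional marginal, so that ties --- the discontinuity locus of the rejection-region indicator --- carry no mass in the limit. Everything else is bookkeeping.
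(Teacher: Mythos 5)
Your proof is correct and follows essentially the same route as the paper's: the paper likewise vectorizes $(T_1^n,\dots,T_w^n)$, applies the multivariate Lindeberg--Feller CLT to obtain the $N(\bm{0},s^2\bm{I})$ limit, and then passes to the rejection probability via the continuous mapping theorem on the tie-free set together with the exchangeability (Monte Carlo) computation, which is exactly its Lemma \ref{mc}. Your write-up is in fact somewhat more explicit than the paper's in verifying the Lindeberg condition for the vectorized array and in computing $P(Z_1>Z_{(m)})$ directly from the uniform rank.
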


 We now provide an extension of Theorem \ref{mainST} to interval hypotheses. The proof is a straightforward adaptation of the proof of Theorem  \ref{mainST}.

 \begin{corollary}[Interval hypotheses]   \label{mainSTinterval}
Suppose Assumption \ref{as:nu} holds.
 Consider a null hypothesis $H'$ which implies that $\mathbb{E}\nu_i \leq 0$ (respectively $\mathbb{E}\nu_i \geq 0$) for all $1\leq i \leq n$. 
Then for every $\epsilon >0$, there exists an $N\in\mathbb{N}$ such that under $H'$, for every $n>N$,
 $\mathbb{P}\big\{T_1^n>T_{(\lceil(1-\alpha)w \rceil)}^n\big\}$ (respectively $\mathbb{P}\big\{T_1^n <T_{(\lfloor \alpha w+1 \rfloor )}^n\big\}$)  is at most $\lfloor \alpha w \rfloor/w +\epsilon$. 
\end{corollary}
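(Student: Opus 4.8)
The plan is to reduce the statement to Theorem~\ref{mainST} by a monotonicity argument that exploits that the means $\mathbb{E}\nu_i$ all have the same sign. I will treat the case where $H'$ implies $\mathbb{E}\nu_i\le 0$ for every $i$ (the case $\mathbb{E}\nu_i\ge 0$ is handled symmetrically, the event $\{T_1^n<T_{(\lfloor\alpha w+1\rfloor)}^n\}$ playing the role of $\{T_1^n>T_{(\lceil(1-\alpha)w\rceil)}^n\}$). Write $\nu_i^{*}:=\nu_i-\mathbb{E}\nu_i$, so that $\mathbb{E}\nu_i^{*}=0$ and $\mathrm{var}(\nu_i^{*})=\mathrm{var}(\nu_i)$, and set $T_j^{*n}:=n^{-1/2}\sum_{i=1}^n g_{ji}\nu_i^{*}$, with $g_1=(1,\dots,1)$ as before.

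The key step is the pointwise inequality, valid for every $j\ge 2$ and every realization of $g_j$ and of $\nu_1,\dots,\nu_n$,
\[
(T_1^n-T_j^n)-(T_1^{*n}-T_j^{*n})=n^{-1/2}\sum_{i=1}^n(1-g_{ji})\,\mathbb{E}\nu_i\le 0,
\]
since each factor $1-g_{ji}\in\{0,2\}$ is nonnegative and each $\mathbb{E}\nu_i\le 0$. Hence $T_1^n-T_j^n\le T_1^{*n}-T_j^{*n}$, so $\{T_1^n>T_j^n\}\subseteq\{T_1^{*n}>T_j^{*n}\}$ for every $j\ge 2$, simultaneously on the same probability space. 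Because $\{T_1^n>T_{(\lceil(1-\alpha)w\rceil)}^n\}$ equals $\{\#\{j\ge 2:T_j^n<T_1^n\}\ge\lceil(1-\alpha)w\rceil\}$ (and likewise for the starred statistics), it follows that
\[
\{T_1^n>T_{(\lceil(1-\alpha)w\rceil)}^n\}\subseteq\{T_1^{*n}>T_{(\lceil(1-\alpha)w\rceil)}^{*n}\}.
\]

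It then suffices to verify that $\nu_1^{*},\nu_2^{*},\dots$ satisfy Assumption~\ref{as:nu}: once this is done, Theorem~\ref{mainST} applied to these variables (for which $\mathbb{E}\nu_i^{*}=0$) gives $\mathbb{P}\{T_1^{*n}>T_{(\lceil(1-\alpha)w\rceil)}^{*n}\}\to\lfloor\alpha w\rfloor/w$, and the displayed inclusion yields the claimed bound for all $n$ beyond some $N=N(\epsilon)$. The convergence $s_n^2\to s^2>0$ carries over verbatim because the variances are unchanged; the Lindeberg-type condition is the only step that is not purely formal, and this is the main obstacle. For it I would split the indices according to whether $|\mathbb{E}\nu_i|$ is small or large relative to $\sqrt n$: on the former set, $(\nu_i^{*})^2$ is at most a fixed multiple of $\nu_i^2$ on the event entering the condition, so those indices are controlled by the Lindeberg condition for the $\nu_i$; on the latter set, using that all $\mathbb{E}\nu_i$ share one sign, $\nu_i$ is with high probability at distance of order $|\mathbb{E}\nu_i|$ from $0$, so their aggregate contribution is again dominated by the Lindeberg sum for the $\nu_i$ and vanishes. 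Conceptually the hardest part is spotting the pointwise inequality above — comparing with the centred statistics rather than attempting a central limit theorem for the shifted ones — which is exactly what makes the reduction to Theorem~\ref{mainST} "straightforward"; everything else is bookkeeping.
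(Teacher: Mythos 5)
Your argument is correct and is, as far as one can tell, precisely the ``straightforward adaptation'' the paper alludes to: the paper gives no written-out proof of this corollary, and your pointwise comparison $T_1^n-T_j^n\le T_1^{*n}-T_j^{*n}$ (valid since $1-g_{ji}\in\{0,2\}$ and $\mathbb{E}\nu_i\le 0$), followed by an application of Theorem \ref{mainST} to the centred variables, is the natural way to carry it out. The one point that genuinely needs checking is the one you flag: Assumption \ref{as:nu} is imposed on the $\nu_i$, not on $\nu_i^{*}=\nu_i-\mathbb{E}\nu_i$, so the Lindeberg condition must be transferred. Under the standard two-sided reading of the truncation (i.e.\ $|\nu_i|/\sqrt n>\epsilon$), the Lindeberg condition together with $\tfrac1n\sum_i \mathrm{var}(\nu_i)\to s^2$ forces $\max_{i\le n}|\mathbb{E}\nu_i|=o(\sqrt n)$, after which the transfer is routine because $(\nu_i^{*})^2\le 2\nu_i^2+2(\mathbb{E}\nu_i)^2$ and $|\nu_i^{*}|>\epsilon\sqrt n$ implies $|\nu_i|>\tfrac{\epsilon}{2}\sqrt n$ for large $n$; your index-splitting sketch amounts to the same thing. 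Be aware, though, that with the one-sided indicator $\mathbbm{1}_{\{\nu_i/\sqrt n>\epsilon\}}$ exactly as printed in Assumption \ref{as:nu}, the transfer can fail (take $\nu_i\le 0$ almost surely, two-valued, with $\mathbb{E}\nu_i\asymp-\sqrt i$ and unit variance: the condition holds vacuously for the $\nu_i$ but fails for the $\nu_i^{*}$); this is a defect of the assumption as literally stated rather than of your proof, and under the intended two-sided version your argument goes through.
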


The following corollary extends Theorem \ref{mainST}  to two-sided tests. The proof is analogous to that of Theorem \ref{mainST}.

\begin{corollary}[Two-sided test]   \label{mainSTts}
Suppose Assumption \ref{as:nu} holds.
Consider $\alpha_1,\alpha_2 \in \{0/w, 1/w,...,(w-1)/w\}$. 
Under $H_0$, as $n\rightarrow \infty$,
$$\mathbb{P}\Big[\big \{T_1^n<T_{(\alpha_1 w+1)}^n\big \} \cup \big\{T_1^n>T_{((1-\alpha_2)w)}^n\big\}\Big] \rightarrow \alpha_1+\alpha_2.$$
\end{corollary}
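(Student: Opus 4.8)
The plan is to piece the two-sided rejection probability together from the building blocks already established in Theorem \ref{mainST}. The key structural fact from that theorem is that the vector $(T_1^n,\dots,T_w^n)$ converges jointly in distribution to $(Z_1,\dots,Z_w)$, where the $Z_j$ are i.i.d.\ $N(0,s^2)$ with $s^2=\lim_n s_n^2>0$. Since $s^2>0$, the limiting law is continuous, so with probability one all the $Z_j$ are distinct; hence the events on which ties occur among the $T_j^n$ (or between $T_1^n$ and an order statistic) have limiting probability zero and can be ignored throughout. First I would observe that the lower-tail event $\{T_1^n<T_{(\alpha_1 w+1)}^n\}$ and the upper-tail event $\{T_1^n>T_{((1-\alpha_2)w)}^n\}$ are disjoint (again up to a null event in the limit, since $\alpha_1 w + 1 \leq (1-\alpha_2)w$ when $\alpha_1+\alpha_2<1$, and the boundary case $\alpha_1+\alpha_2=1$ is handled by continuity as the two thresholds coincide only on a null set), so the probability of the union is the sum of the two probabilities in the limit.

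Next I would compute each tail probability separately by a symmetry-and-exchangeability argument. By the continuous mapping theorem applied to the map sending a vector to (its first coordinate, its sorted coordinates), the joint law of $(T_1^n, T_{(1)}^n,\dots,T_{(w)}^n)$ converges to that of $(Z_1, Z_{(1)},\dots,Z_{(w)})$. For the upper tail, $\mathbb{P}\{T_1^n>T_{((1-\alpha_2)w)}^n\}\to\mathbb{P}\{Z_1>Z_{((1-\alpha_2)w)}\}$, i.e.\ the probability that $Z_1$ exceeds the $(1-\alpha_2)w$-th smallest among $w$ i.i.d.\ continuous variables; by exchangeability of $Z_1,\dots,Z_w$ this equals the probability that $Z_1$ is strictly among the top $w-(1-\alpha_2)w=\alpha_2 w$ values, which is $\alpha_2 w/w=\alpha_2$. (Here $\alpha_2 w$ is an integer by the hypothesis $\alpha_2\in\{0/w,\dots,(w-1)/w\}$, which is exactly what makes the rank count come out cleanly without floor operations.) The same argument applied to the lower tail gives $\mathbb{P}\{T_1^n<T_{(\alpha_1 w+1)}^n\}\to\alpha_1$: $Z_1$ is strictly below the $(\alpha_1 w+1)$-th order statistic iff it is among the $\alpha_1 w$ smallest values, which has probability $\alpha_1$.

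Adding the two limits gives $\alpha_1+\alpha_2$, as claimed. The main obstacle — and the only real subtlety — is the careful bookkeeping around ties and boundary ranks: one must confirm that the strict inequalities in the rejection region and the integer rank indices line up so that the limiting count of ``favorable'' orderings is exactly $\alpha_1 w$ and $\alpha_2 w$ respectively, and that the vanishing of tie probabilities (which relies on $s^2>0$, guaranteeing a non-degenerate continuous limit) legitimately removes any ambiguity. Apart from this, the argument is a direct reuse of the joint normality/independence conclusion of Theorem \ref{mainST} together with the continuous mapping theorem and the exchangeability of the limiting coordinates, exactly as the remark preceding the statement anticipates.
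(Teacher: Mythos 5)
Your proof is correct and follows essentially the route the paper intends: the paper proves this corollary by declaring it ``analogous to Theorem \ref{mainST}'', whose proof is exactly your combination of the joint CLT for $(T_1^n,\dots,T_w^n)$, a continuous-mapping/null-boundary argument (the paper's Lemma \ref{mc}), and exchangeability of the limiting i.i.d.\ coordinates to count ranks. One tiny wobble: at $\alpha_1+\alpha_2=1$ the two threshold order statistics are adjacent and a.s.\ \emph{distinct} in the limit rather than coinciding; disjointness of the two tail events instead follows because $T_1^n$ is itself one of the sorted values, so its rank lies in exactly one of the two index sets $\{1,\dots,\alpha_1 w\}$ and $\{(1-\alpha_2)w+1,\dots,w\}$ --- which your rank computation in the second paragraph already delivers.
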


Note that our test does not rely on an approximate symmetry assumption \citep[as e.g.][]{canay2017randomization}. Indeed, even if the scores are very skewed, asymptotically the test of Theorem \ref{mainST} is exact.
However,  if the $\nu_i$ are symmetric, then even for small $n$
the  size   is always at most $\alpha$, as noted in the following proposition.   A special case of this result is already discussed in \citet[][\S 21]{fisher1935}, where every element of $\{1,-1\}^n$ is used once.

\begin{proposition} \label{exacttest}
Suppose that $\nu_1,...,\nu_n$ are independent and continuous and that under $H_0$, for each $1\leq i \leq n$, $\nu_i \,{\buildrel d \over =}\,   -\nu_i$. Then the  size  of the test of Theorem \ref{mainST} is at most $\alpha$ for any $n\in \mathbb{N}$.
Moreover, if  $g_2,...,g_w$ are uniformly drawn from $\{1,-1\}^n\setminus (1,...,1)$ without replacement (so that only $g_1$ takes the  value $(1,...,1)$), then the  rejection probability under $H_0$  is exactly $\lfloor \alpha w \rfloor /w$. (Note that $w$ cannot exceed $2^n$ then.) 
\end{proposition}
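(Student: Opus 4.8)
The plan is to exploit the exact distributional symmetry of the sign-flipped statistics $T_1^n,\dots,T_w^n$ under the stated conditions, so that no asymptotics are needed. First I would observe that, conditionally on $|\nu_1|,\dots,|\nu_n|$, the vector of signs $(\operatorname{sign}(\nu_1),\dots,\operatorname{sign}(\nu_n))$ is uniformly distributed on $\{-1,1\}^n$ (using independence, continuity so that ties have probability zero, and $\nu_i \buildrel d \over = -\nu_i$). Since the random flips $g_2,\dots,g_w$ are independent of the data and each $g_j\in\{-1,1\}^n$, the joint conditional law of $\big(g_1\odot\operatorname{sign}(\nu)\,,\,g_2\odot\operatorname{sign}(\nu)\,,\dots,\,g_w\odot\operatorname{sign}(\nu)\big)$ is exchangeable: applying a fixed sign vector to all coordinates simultaneously, or permuting the $w$ blocks, leaves the law unchanged. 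Hence, conditionally on $|\nu_1|,\dots,|\nu_n|$ and on $g_1,\dots,g_w$, the statistics $T_1^n,\dots,T_w^n$ are exchangeable real-valued random variables.

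Next I would invoke the standard argument for rank/permutation tests. Given $w$ exchangeable random variables $T_1^n,\dots,T_w^n$, the probability that $T_1^n$ exceeds its own $\lceil(1-\alpha)w\rceil$-th order statistic is at most $\lfloor \alpha w\rfloor / w \le \alpha$; this is a purely combinatorial fact about the rank of one coordinate among $w$ exchangeable ones, the inequality arising only from possible ties (which here have conditional probability zero once we also condition appropriately, but the bound $\le\alpha$ holds regardless). Because this bound holds conditionally on $|\nu_1|,\dots,|\nu_n|$ for every realization, it holds unconditionally by taking expectations, giving size at most $\alpha$ for every $n$.

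For the second assertion, when $g_2,\dots,g_w$ are drawn without replacement from $\{-1,1\}^n\setminus\{(1,\dots,1)\}$, the multiset $\{g_1,\dots,g_w\}$ consists of $w$ distinct sign vectors, one of which is $(1,\dots,1)$; conditionally on $|\nu_i|$ the $w$ values $T_1^n,\dots,T_w^n$ are then almost surely distinct (two equal values would force a linear relation among the $|\nu_i|$ with coefficients in $\{-2,0,2\}$, an event of probability zero under continuity), and by the same exchangeability the rank of $T_1^n$ is uniform on $\{1,\dots,w\}$. The event $\{T_1^n > T_{[1-\alpha]}^n\}$ is exactly the event that this rank exceeds $\lceil(1-\alpha)w\rceil$, which has probability $(w-\lceil(1-\alpha)w\rceil)/w = \lfloor\alpha w\rfloor/w$; again this is conditional-then-unconditional. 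I would also note the constraint $w\le 2^n$ since $|\{-1,1\}^n|=2^n$.

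The main obstacle I expect is handling the exchangeability and the ties carefully: one must make precise the group-theoretic statement that conjugating the i.i.d.\ symmetric data by a random sign vector leaves the joint law of $(T_1^n,\dots,T_w^n)$ invariant under permuting the indices $1,\dots,w$, and one must argue that the continuity assumption really does rule out ties with conditional probability one in the ``without replacement'' case (and that in the general ``with replacement'' case ties only help, keeping the size $\le\alpha$). Everything else — the combinatorial rank bound and the conditioning argument — is routine.
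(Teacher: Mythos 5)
Your proof is correct and follows essentially the same route as the paper: both rest on the observation that $(\nu_1,\dots,\nu_n)$ is distributionally invariant under coordinatewise sign flips, which turns the procedure into a standard random-transformation (group-invariance) test whose validity is a combinatorial fact about exchangeable statistics. The only difference is that the paper delegates that combinatorial/exchangeability step to Theorem 2 of Hemerik and Goeman (2018), whereas you reprove it directly; your conditional-on-$|\nu_1|,\dots,|\nu_n|$ treatment of ties is slightly informal (for fixed absolute values a tie can have positive conditional probability, but only on a null set of conditioning values), which does not affect the conclusion.
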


If the $g_j$ are drawn with replacement or the $\nu_i$ are discrete, then  under $H_0$ the rejection probability  of the test of Proposition \ref{exacttest} is (slightly) smaller than $\lfloor \alpha w \rfloor /w$ for finite $n$, due to the possibility of ties among the test statistics $T_j^n$, $1\leq j \leq w$.  Otherwise the  rejection probability under $H_0$   is $\lfloor \alpha w \rfloor /w$.

Note that when the rejection probability under $H_0$ is $\lfloor \alpha w \rfloor/w$, it can be advantageous to take $w$ such that $\alpha$ is a multiple of $1/w$, to exhaust  the nominal level.

In Theorem \ref{mainST}, we  did not assume continuity of the observations $\nu_i$. 
There,  under the mild Assumption \ref{as:nu},  for $n\rightarrow \infty$, $\mathbb{P} (T_j^n=T_k^n )\rightarrow 0$ for any $1\leq j<k\leq w$, regardless of the distribution of the $\nu_i$. This allows using Theorem \ref{mainST} for discrete GLMs.

\subsection{Robustness} \label{secrob}
As a main example we consider   the exponential  family, i.e.,
suppose independent variables $Y_1,...,Y_n$ have densities of the form
$$
f(y_i;\eta_i) = \exp\bigg \{ \frac{y_i \lp_i-b(\lp_i)}{a_i} +c(y_i)   \bigg\},
$$
 where $\lp_i=x_i\beta +\bm{z}_i{\fgamma}$, $x_i, \beta \in \mathbb{R}$,  $\bm{z}_i, {\fgamma}\in \mathbb{R}^{m}$ for some $m\in\mathbb{N}$. Here $\beta$ is the coefficient of interest and presently we assume the other covariates $\fgamma$ to be known.
The canonical link function $g$ satisfies $\lp_i=g(\mu_i)$, where $g^{-1}(\lp_i)=\mu_i=\mathbb{E}(y_i)=b'(\lp_i)$ and $a_i=\text{var}(y_i)/b''(\lp_i)$ \citep{agresti2015foundations}.
For $H_0: \beta =\beta_0$, the score $\sum_{i=1}^{n} \nu_i =  \sum_{i=1}^n\frac{\partial}{ \partial \beta} \log\{f(y_i;\eta_i)\} \vert_{\beta=\beta_0}$   is
$$
\sum_{i=1}^{n} \frac{x_i\big(y_i-b'(\lp_i)\big)}{a_i}  \Big\vert_{\beta=\beta_0}  =\sum_{i=1}^{n} \frac{x_i\big(y_i-\mathbb{E}(y_i)\big)}{a_i}  \Big\vert_{\beta=\beta_0}. 
$$

For example, the Poisson model  has link function $g=\log$,  $b(\lp_i)=\exp(\lp_i)$, $a_i=1$ and $c(y_i)=-\log(y_i!).$
Hence $\mathbb{E}(y_i)=b'(\lp_i)=\exp(\lp_i)$. Thus the score function is
$$  \sum_{i=1}^{n} x_i\big (y_i -\mu_i  \big )\vert_{\beta=\beta_0} = \sum_{i=1}^{n} x_i\big \{y_i -\exp( x_i\beta_0 +\bm{z}_i{\fgamma})   \big \}. $$
For the normal distribution, $a_i=\sigma^2$,  so that the score is
$$
 \sum_{i=1}^{n} \frac{x_i(y_i-\lp_i)}{\sigma^2}\Big \vert_{\beta=\beta_0}.
$$

Apart from some mild assumptions, the main assumption made in Theorem \ref{mainST} is that $\mathbb{E}(\nu_i)=0$, $i=1,...,n$. This is satisfied as soon as $\mu_i\vert_{\beta=\beta_0}$ is the true expected value of $Y_i$. Then the test is asymptotically exact even if the $a_i$ are misspecified, i.e., if the variance or distributional shape of $Y_i$ is misspecified.
The $a_i$ are even allowed to be misspecified by a factor which depends on the covariates, as long as Assumption \ref{as:nu} holds.

As a concrete example, consider the normal model with identity link function, which assumes that $var(Y_1)=...=var(Y_n)$. If the real distribution is heteroscedastic,  then the test will still be exact for finite $n$, since the $\nu_i$ are symmetric.
The parametric test, however, loses its properties, for example because  the estimated variance does not have the assumed chi-squared distribution.
In Section \ref{simulations} it is illustrated that our approach can be much more robust against heteroscedasticity than a parametric test.

Another example is the situation where the model is Poisson, i.e., $var(Y_i)=\mu_i$ is assumed, but  in reality $var(Y_i)>\mu_i$, a form of overdispersion which occurs very often in practice. Then the  parametric score test underestimates the Fisher information and is anti-conservative. To take the overdispersion into account it could be explicitly estimated. However, if the overdispersion factor is not constant, but  depends on the covariates, then again the parametric test loses its properties. Theorem \ref{mainST}, however, often still applies, so that an asymptotically exact test is obtained. 

Further, note that if $\mathbb{E}(Y_i)$ depends on a nuisance variable $Z^l_{i}$ which is latent and ignored, where $Z^l_{i}$ is independent of  $X_i$, then the test may still be valid. The reason is that marginal over $Z^l_{i}$, $\mathbb{E}(Y_i)$ may still be computed correctly (see, for example, Section \ref{secignn}). Such latent nuisance variables will increase the variance of $Y_i$, however, which can pose a problem for the classical parametric score test, which needs to compute the Fisher information. When the latent variable is not independent of $X_i$, this usually does pose a problem for our test (even as $n\rightarrow \infty$), since   $\mathbb{E}(Y_i-\mu_i)$ becomes dependent on $X_i$ under $H_0$.

\section{Taking into account nuisance estimation} \label{unknownn}

Consider  independent and identically distributed pairs $(\bm{X}_i,Y_i)$, $i=1,...,n$, where $\bm{X}_i$ is some covariate vector and  $Y_i\in \mathbb{R}$ has distribution $\mathbb{P}_{\beta,\fgamma_0,\bm{X}_i}$, which depends on parameter of interest  $\beta\in \mathbb{R}$ and unknown nuisance parameter $\fgamma_0$, which lies in a set  $\mathbb{G} \subseteq  \mathbb{R}^{k-1}$,  where $k$ is the total number of modeled parameters.  
We will discuss the issues arising from estimating  $\fgamma_0$ and propose a solution, which allows us to obtain an asymptotically exact test  based on score flipping. 
Note that in this paper, the model defined above is the model considered by the user. It is the model used to compute the scores. We consider this model to be correct, unless explicitly stated otherwise, for example in Section \ref{robeff}. The parameter  $\fgamma_0$ is part of the  model considered by the user, so it is always modeled and estimated. For example, $\fgamma_0$  never represents ignored overdispersion.

We consider the  null hypothesis $H_0: \beta=\beta_0 \in \mathbb{R}$.  Generalizations to interval hypotheses and two-sided tests can be obtained as in Corollaries \ref{mainSTinterval}  and \ref{mainSTts}.  The case that the parameter of interest is multi-dimensional is considered in Section \ref{secmv}.

 Suppose that for all $\fgamma\in\mathbb{G}$, $\mathbb{P}_{\beta,\fgamma,\bm{X}_i}$ has a density $f_{\beta,\fgamma,\bm{X}_i}$ around $\beta_0$ with respect to some dominating measure.  For $1\leq i \leq n$ write $$\nu_{\fgamma,i}= \frac{\partial }{\partial \beta } \log f_{\beta,{\fgamma},\bm{X}_i}(Y_i)\vert_{\beta=\beta_0},$$ where we assume the derivative exists. 
The value $\nu_i$ is the score for the $i$-th observation. 
Under $H_0$, $\mathbb{E} (\nu_{\fgamma_0,i})=0$, $i=1,...,n$.
The score for all $n$ observations simultaneously is $n^{1/2}S_{{\fgamma}}$, where
$$S_{{\fgamma}}=n^{-1/2}\sum_{i=1}^n \nu_{\fgamma,i}.$$

Assume that $\hat{\fgamma}$ is a $\sqrt{n}$-consistent estimate of $\fgamma_0$,  taking values in $\mathbb{G}$. 
For every $1\leq i \leq n$, let  $$ \fnu^{(k-1)}_{\hat{\fgamma},i}=    \frac{\partial }{\partial \fgamma} \log f_{\beta_0,{\fgamma},\bm{X}_i}({Y_i}) \Big\vert_{\fgamma=\hat{\fgamma}}  \in \mathbb{R}^{k-1} $$
denote the $(k-1)$-vector of score contributions for the nuisance parameters,  which is assumed to exist. 
Let $$\fatS^{(k-1)}_{\hat{\fgamma}}=n^{-1/2} \sum_{i=1}^{n}   \fnu^{(k-1)}_{\hat{\fgamma},i} \in \mathbb{R}^{k-1}$$ be the vector of nuisance scores.

For $1\leq j \leq w$, let the superscript $j$ denote that $g_j$ has been applied:
\begin{align*}
S^{j}_{\hat{\fgamma}}&=n^{-1/2} \sum_{i=1}^{n}   g_{ji}\nu_{\hat{\fgamma},i},\\
\fatS^{(k-1),j}_{\hat{\fgamma}}&=n^{-1/2} \sum_{i=1}^{n}   g_{ji}\fnu^{(k-1)}_{\hat{\fgamma},i}.
\end{align*}

\subsection{Asymptotically exact test} \label{seceffsctest}

When the nuisance parameter $\fgamma_0$ is unknown, it needs to be estimated, which is typically done by maximizing the likelihood of the data under the null hypothesis.
The  distribution of $S_{\hat{\fgamma}}$ can be substantially different from that of $S_{\fgamma_0}$, the score based on the true nuisance parameters. 
Indeed, under the null hypothesis, the asymptotic variance of $S_{\hat{\fgamma}}$ is not the Fisher information,  but the \emph{effective Fisher information} \citep [][section 9.3]{rippon2010generalised, rayner1997asymptotically, hall1990large,marohn2002comment, cox1979theoretical}, which is also the asymptotic variance of the \emph{effective score}, defined below.
The effective information is  smaller than the information, given that the score for the parameter of interest and the nuisance score are correlated. Intuitively, the reason is that the nuisance variable will be used to explain part of the apparent effect of the variable of interest, also asymptotically.

The estimation of $\fgamma_0$ makes the summands $\nu_{\hat{\fgamma},1},...,\nu_{\hat{\fgamma},n}$ underlying $S_{\hat{\fgamma}}$ correlated, in such a way that $var(S_{\hat{\fgamma}})<  var(S_{\fgamma_0})$ (if the score is correlated with the nuisance score). 
Note however that after random flipping, the summands are not correlated anymore. This means that the variance of $S_{\hat{\fgamma}}$ is asymptotically smaller than the variance of $S^j_{\hat{\fgamma}}$, $2\leq j \leq w$ (see the proof of Theorem \ref{signflipeff}). 
Hence, using $\nu_{\hat{\fgamma},1},...,\nu_{\hat{\fgamma},n}$ in the test of Theorem \ref{mainST} can lead to a conservative test, even as $n\rightarrow\infty$.

To make the test asymptotically exact again, we would like to adapt the individual scores such that they are less dependent on the random variation of $\hat{\fgamma}$.
We do this by considering the so-called \emph{effective score},  which ``is `less dependent' on the nuisance parameter than the usual score statistic" \citep[][p. 344]{marohn2002comment}.

The effective score $S^*_{\hat{\fgamma}} $ and the underlying summands $\nu^*_{\hat{\fgamma},i}$, $i=1,...,n$ (which we assume have nonzero variance for $\hat{\gamma}=\gamma_0$)  are defined as

\begin{align*}  
S^*_{\hat{\fgamma}} =  S_{\hat{\fgamma}} & -\Ima_{12}'\Ima_{22}^{-1} \fatS^{(k-1)}_{\hat{\fgamma}},\\  
\nu^*_{\hat{\fgamma},i} = \nu_{\hat{\fgamma},i}  & -\Ima_{12}'\Ima_{22}^{-1} \fnu^{(k-1)}_{\hat{\fgamma},i},   
\end{align*}
so that
$$S^*_{\hat{\fgamma}}=n^{-1/2} \sum_{i=1}^n  \nu^*_{\hat{\fgamma},i}.$$ 
Here
$$\Ima=  \begin{bmatrix}
      \hat{{ \mathcal{I} }}_{11} & \Ima_{12}' \\
     \Ima_{12} &  \Ima_{22}
\end{bmatrix},$$
with $\hat{{ \mathcal{I} }}_{11} \in \mathbb{R}$ and the  $(k-1) \times (k-1)$ matrix $\Ima_{22}$  assumed invertible, is a consistent estimate of the population Fisher information $\Imag$,  which is assumed to exist and  is the variance of $(\nu_{\fgamma_0,i}, {\fnu^{(k-1)}_{\fgamma_0,i}}')'$ marginal over $\bm{X}_i$, under $H_0$. The matrix  $\Imag$ is assumed to be continuous in the parameters.  In GLMs, typically $\Ima= n^{-1}\bm{X'}\hat{\bm{W}}\bm{X}$, where $\bm{X}$ is the design matrix and $\hat{\bm{W}}$ the estimated weight matrix \citep[][p. 126]{agresti2015foundations}.
Further, for $1\leq j \leq w$ we write
$$ S^{*j}_{\hat{\fgamma}} =  S^j_{\hat{\fgamma}}  -\Ima_{12}'\Ima_{22}^{-1} \fatS^{(k-1),j}_{\hat{\fgamma}}. $$

As discussed, $S_{\hat{\fgamma}}$ is not generally asymptotically equivalent to $S_{\fgamma_0}$.
The effective score $S^*_{\fgamma_0}$ (based on $\Ima=\Imag$) however is the residual from the projection  
of the score $S_{\fgamma_0}$ on the space spanned by the nuisance scores. Hence   $S^*_{\fgamma_0}$ is uncorrelated with the nuisance scores $\fatS^{(k-1)}_{{\fgamma_0}}$ \citep[][p. 344]{marohn2002comment}.
Correspondingly, as noted in the proof of Theorem \ref{signflipeff}, under mild regularity assumptions $S^*_{\hat{\fgamma}}= S^*_{\fgamma_0}+ o_{\mathbb{P}_{\beta_0,\fgamma_0}}(1)$, i.e., asymptotically the effective score  is not affected by the nuisance estimation.

Note that if $\hat{\fgamma}$ is the maximum likelihood estimate under $H_0$, then $\fatS^{(k-1)}_{\hat{\fgamma}}=\bm{0}$, so that $S^*_{\hat{\fgamma}}=  S_{\hat{\fgamma}}$. 
The summands $\nu^*_{\hat{\fgamma},i}$ and $\nu_{\hat{\fgamma},i}$ are different, however, 
and the key point is that  $S^*_{\hat{\fgamma}}= S^*_{\fgamma_0}+ o_{\mathbb{P}_{\beta_0,\fgamma_0}}(1)$.

Like \citet{marohn2002comment}, we assume that  if $\xi\in \mathbb{R}$ and $\beta=\beta^n=\beta_0+n^{-1/2}\xi$, then
$$S_{\hat{\fgamma}} = S_{\fgamma_0} - \Imag_{12}' \sqrt{n}(\hat{\fgamma}-\fgamma_0)+o_{\mathbb{P}_{ \beta^n,\fgamma_0}}(1), $$
$$ \fatS^{(k-1)}_{\hat{\fgamma}}= \fatS^{(k-1)}_{\fgamma_0} - \Imag_{22} \sqrt{n}(\hat{\fgamma}-\fgamma_0)+o_{\mathbb{P}_{ \beta^n,\fgamma_0}}(1),$$
which is satisfied under mild assumptions such as continuous second order derivatives.

\begin{theorem} \label{signflipeff}

Consider the test of Theorem \ref{mainST} with $T_j^n=S^{*j}_{\hat{\fgamma}}$, $1\leq j \leq w$.  
As $n\rightarrow \infty$,   under $H_0$ the rejection probability   converges to 
$\lfloor \alpha w \rfloor /w \leq \alpha$.
\end{theorem}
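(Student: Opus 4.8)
The plan is to reduce the claim to Theorem~\ref{mainST} applied to the \emph{true}-nuisance effective score contributions
$$\nu^*_{\fgamma_0,i}:=\nu_{\fgamma_0,i}-\Imag_{12}'\Imag_{22}^{-1}\fnu^{(k-1)}_{\fgamma_0,i},\qquad i=1,\dots,n.$$
Concretely, I would first establish that for every $1\le j\le w$,
$$S^{*j}_{\hat{\fgamma}}\;=\;n^{-1/2}\sum_{i=1}^n g_{ji}\,\nu^*_{\fgamma_0,i}\;+\;o_{\mathbb{P}_{\beta_0,\fgamma_0}}(1),$$
the $o_{\mathbb P}$ being over the joint randomness of the data and the sign flips. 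Write $\tilde T_j^n:=n^{-1/2}\sum_{i=1}^n g_{ji}\,\nu^*_{\fgamma_0,i}$. Since $(\bm X_i,Y_i)$ are i.i.d.\ and $\nu^*_{\fgamma_0,i}$ is a fixed measurable function of $(\bm X_i,Y_i)$ evaluated at the non-random $\fgamma_0$ and $\Imag$, the $\nu^*_{\fgamma_0,i}$ are i.i.d.; under $H_0$ they have mean $0$ (because $\mathbb E\nu_{\fgamma_0,i}=0$ and the nuisance score has mean $\bm 0$ at the truth) and finite, strictly positive variance $s^2$ (the effective score is assumed to have nonzero variance and $\Imag$ is assumed to exist). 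Hence Assumption~\ref{as:nu} holds for them trivially with $s_n^2\equiv s^2$, and Theorem~\ref{mainST} gives that $\tilde T_1^n,\dots,\tilde T_w^n$ are asymptotically independent $N(0,s^2)$ and that the test rejecting when $\tilde T_1^n>\tilde T_{[1-\alpha]}^n$ has rejection probability converging to $\lfloor\alpha w\rfloor/w$.

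To obtain the displayed approximation I would Taylor-expand $\nu_{\hat{\fgamma},i}$ and $\fnu^{(k-1)}_{\hat{\fgamma},i}$ about $\fgamma_0$. Using $\sqrt n$-consistency of $\hat{\fgamma}$, a second-order expansion gives, for each $j$,
$$S^{j}_{\hat{\fgamma}}\;=\;n^{-1/2}\sum_i g_{ji}\nu_{\fgamma_0,i}\;+\;\Big(n^{-1}\sum_i g_{ji}\,\tfrac{\partial}{\partial\fgamma}\nu_{\fgamma,i}\big|_{\fgamma_0}\Big)\sqrt n(\hat{\fgamma}-\fgamma_0)\;+\;R_j^n,$$
and analogously for $\fatS^{(k-1),j}_{\hat{\fgamma}}$, where $R_j^n$ is bounded by $\|\hat{\fgamma}-\fgamma_0\|^2\,n^{-1/2}\sum_i(\text{continuous second derivatives near }\fgamma_0)=o_{\mathbb{P}}(1)$. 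For $j=1$ the middle term converges to $-\Imag_{12}'\sqrt n(\hat{\fgamma}-\fgamma_0)$ by the law of large numbers, which recovers the assumed expansion of $S_{\hat{\fgamma}}$; but for $2\le j\le w$, conditioning on the data shows $n^{-1}\sum_i g_{ji}\,\tfrac{\partial}{\partial\fgamma}\nu_{\fgamma,i}\big|_{\fgamma_0}$ has conditional mean $\bm 0$ and conditional variance $O_{\mathbb P}(n^{-1})$, hence is $o_{\mathbb P}(1)$, and multiplied by $\sqrt n(\hat{\fgamma}-\fgamma_0)=O_{\mathbb P}(1)$ it is negligible. Thus $S^{j}_{\hat{\fgamma}}=S^{j}_{\fgamma_0}+o_{\mathbb P}(1)$ and $\fatS^{(k-1),j}_{\hat{\fgamma}}=\fatS^{(k-1),j}_{\fgamma_0}+o_{\mathbb P}(1)$ for $j\ge 2$; for $j=1$ one instead keeps the bias terms, which cancel in $S^*_{\hat{\fgamma}}=S_{\hat{\fgamma}}-\Ima_{12}'\Ima_{22}^{-1}\fatS^{(k-1)}_{\hat{\fgamma}}$ via $\Imag_{12}'\Imag_{22}^{-1}\Imag_{22}=\Imag_{12}'$, yielding $S^*_{\hat{\fgamma}}=S^*_{\fgamma_0}+o_{\mathbb P}(1)$. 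Combining with consistency of $\Ima$, continuity of $\Imag$ and invertibility of $\Imag_{22}$ (so $\Ima_{12}'\Ima_{22}^{-1}\to\Imag_{12}'\Imag_{22}^{-1}$ in probability) and with tightness of $\fatS^{(k-1),j}_{\fgamma_0}$, one gets $S^{*j}_{\hat{\fgamma}}=S^{j}_{\fgamma_0}-\Imag_{12}'\Imag_{22}^{-1}\fatS^{(k-1),j}_{\fgamma_0}+o_{\mathbb P}(1)=\tilde T_j^n+o_{\mathbb P}(1)$ for all $j$, as claimed.

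Finally I would transfer the conclusion. By the approximation and Theorem~\ref{mainST}, $(S^{*1}_{\hat{\fgamma}},\dots,S^{*w}_{\hat{\fgamma}})$ converges in distribution to $N(\bm 0,s^2 I_w)$. The rejection indicator $(t_1,\dots,t_w)\mapsto\mathbbm 1\{t_1>t_{(\lceil(1-\alpha)w\rceil)}\}$ is continuous off $\bigcup_{j<k}\{t_j=t_k\}$, a set of $N(\bm 0,s^2 I_w)$-measure zero since $s^2>0$; by the continuous mapping theorem the rejection probability of the test with $T_j^n=S^{*j}_{\hat{\fgamma}}$ has the same limit as that of the test with $T_j^n=\tilde T_j^n$, namely $\lfloor\alpha w\rfloor/w\le\alpha$. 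The step I expect to be the main obstacle is the second paragraph: showing that random sign-flipping annihilates the $O_{\mathbb P}(1)$ nuisance-estimation bias term that is present (and non-negligible) for $j=1$. The delicate points are (i) controlling the second-order Taylor remainder uniformly as $o_{\mathbb P}(1)$ under only a ``continuous second order derivatives'' type assumption, and (ii) making the conditional-variance argument for the flipped average $n^{-1}\sum_i g_{ji}\tfrac{\partial}{\partial\fgamma}\nu_{\fgamma,i}\big|_{\fgamma_0}$ rigorous (finite second moments of the relevant derivatives, so that it is $O_{\mathbb P}(n^{-1/2})$ whereas its unflipped counterpart converges to $-\Imag_{12}'$). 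Everything else is Slutsky-type bookkeeping on top of the already-established Theorem~\ref{mainST}.
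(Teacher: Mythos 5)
Your proposal is correct and follows the same overall architecture as the paper's proof: reduce to the effective scores evaluated at the true nuisance $\fgamma_0$, show $S^{*j}_{\hat{\fgamma}}=S^{*j}_{\fgamma_0}+o_{\mathbb{P}_{\beta_0,\fgamma_0}}(1)$ for every $j$ (with the cancellation for $j=1$ coming from the projection structure of the effective score, exactly as in the paper), and conclude via the multivariate CLT together with the Monte-Carlo/continuous-mapping argument, which is precisely the paper's Lemma \ref{mc}. Where you genuinely diverge is the mechanism for the central step: why sign-flipping annihilates the $\Imag_{12}'\sqrt{n}(\hat{\fgamma}-\fgamma_0)$ bias for $2\le j\le w$. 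The paper splits $S^{j}_{\hat{\fgamma}}=S^{j+}_{\hat{\fgamma}}-S^{j-}_{\hat{\fgamma}}$ into the sums over $\{i:g_{ji}=1\}$ and $\{i:g_{ji}=-1\}$ and applies the assumed first-order expansion to each half, so that each half picks up half the bias and the two halves cancel in the difference; you instead Taylor-expand each summand and observe that the flipped average $n^{-1}\sum_i g_{ji}\,\partial_{\fgamma}\nu_{\fgamma,i}\vert_{\fgamma_0}$ has conditional mean zero and conditional variance $O_{\mathbb{P}}(n^{-1})$, so the coefficient of $\sqrt{n}(\hat{\fgamma}-\fgamma_0)$ \emph{vanishes} rather than \emph{cancels}. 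The trade-off: the paper's version stays formally within its stated expansion assumption but silently extends that assumption to random half-sums (the factor $\tfrac{1}{2}$ in $S^{j\pm}_{\hat{\fgamma}}$ appears without derivation), whereas your version actually derives that factor and makes the probabilistic content explicit, at the price of extra regularity you correctly flag as unproven (finite second moments of $\partial_{\fgamma}\nu_{\fgamma,i}$ and a dominated second-order remainder, which go somewhat beyond the paper's ``continuous second order derivatives''). Both are legitimate at the paper's level of rigor; yours is the more self-contained justification of the step the paper essentially assumes.
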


The test of Theorem \ref{signflipeff} has a parametric counterpart, which uses that under $H_0$, $S^*_{\hat{\fgamma}}$ is asymptotically normal with zero mean and known variance, the effective information \citep[][p. 341]{marohn2002comment}.
This test is asymptotically  equivalent to the test of Theorem \ref{signflipeff}, as the following proposition says.

\begin{proposition} \label{aseqeff}
Let $\xi\in \mathbb{R}$ and suppose the true parameter satisfies $\beta=\beta^n=\beta_0+n^{-1/2}\xi$.
As in Theorem \ref{signflipeff}, let $T_j^n=S^{*j}_{\hat{\fgamma}}$, $1\leq j \leq w$.
Define $\phi_{n,w}= \mathbbm{1}_{\{ T_1^n > T^n_{[1-\alpha]}\}}$  to be the test of Theorem \ref{signflipeff}.
Let $\phi'_n$ be the parametric test $\mathbbm{1}_{\{ T_1^n > \sigma_0 \Phi(1-\alpha) \}}$, where $\sigma_0^2\in\mathbb{R}$ is the effective Fisher information and $\Phi$  the cdf of the standard normal distribution.
Then $\lim_{w\rightarrow\infty}\liminf_{n\rightarrow\infty}\mathbb{P}(\phi_{n,w} =\phi'_n)=1.$
\end{proposition}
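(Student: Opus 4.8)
The plan is to run a two-stage limit argument: for fixed $w$, identify the joint limiting law of $(T_1^n,\dots,T_w^n)$ under the local alternatives $\beta=\beta^n=\beta_0+n^{-1/2}\xi$, and then let $w\to\infty$. Write $c:=\sigma_0\Phi^{-1}(1-\alpha)$ for the deterministic critical value used by $\phi'_n$; it is the $(1-\alpha)$-quantile of the $N(0,\sigma_0^2)$ distribution. For Step~1 I would transfer the asymptotics in the proof of Theorem~\ref{signflipeff} from $H_0$ to the contiguous sequence $\mathbb{P}_{\beta^n,\fgamma_0}$, using the two stochastic expansions displayed immediately before Theorem~\ref{signflipeff} (which are already stated under $\mathbb{P}_{\beta^n,\fgamma_0}$) together with the consistency of $\Ima$. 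This gives $T_1^n=S^{*}_{\hat{\fgamma}}=S^{*}_{\fgamma_0}+o_{\mathbb{P}_{\beta^n,\fgamma_0}}(1)$ and, since the random signs are independent of the data and have mean zero, $T_j^n=S^{*j}_{\hat{\fgamma}}=n^{-1/2}\sum_{i=1}^n g_{ji}\,\nu^{*}_{\fgamma_0,i}+o_{\mathbb{P}_{\beta^n,\fgamma_0}}(1)$ for $2\le j\le w$. A CLT (with contiguity) then yields $T_1^n\Rightarrow W_1$, asymptotically normal with some mean $\mu=\mu(\xi)$ and variance $\sigma_0^2>0$; and, conditionally on the data, $(T_2^n,\dots,T_w^n)$ obeys the Lindeberg condition for sign-flips used in the proof of Theorem~\ref{mainST}, with conditional covariance matrix converging to the \emph{nonrandom} limit equal to $\sigma_0^2$ times the $(w-1)\times(w-1)$ identity. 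Hence jointly $(T_1^n,\dots,T_w^n)\Rightarrow(W_1,W_2,\dots,W_w)$ with $W_2,\dots,W_w$ i.i.d.\ $N(0,\sigma_0^2)$ and independent of $W_1$; in particular the limit vector has a continuous law, so ties among the $T_j^n$ vanish asymptotically.

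Step~2 reduces the claim to a small-interval event. The tests $\phi_{n,w}$ and $\phi'_n$ can disagree only when $T_1^n$ lies weakly between $T_{[1-\alpha]}^n$ and $c$: if $T_1^n>T_{[1-\alpha]}^n$ but $T_1^n\le c$ then $T_{[1-\alpha]}^n<T_1^n\le c$, and symmetrically when $T_1^n\le T_{[1-\alpha]}^n$ but $T_1^n>c$. Consequently, for any $\delta>0$,
\[
\mathbb{P}_{\beta^n,\fgamma_0}(\phi_{n,w}\ne\phi'_n)\ \le\ \mathbb{P}_{\beta^n,\fgamma_0}\!\big(|T_{[1-\alpha]}^n-c|>\delta\big)\ +\ \mathbb{P}_{\beta^n,\fgamma_0}\!\big(|T_1^n-c|\le\delta\big),
\]
because $|T_{[1-\alpha]}^n-c|\le\delta$ forces the whole interval between $T_{[1-\alpha]}^n$ and $c$ into $[c-\delta,c+\delta]$. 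By Step~1 and the continuous mapping theorem applied to the order-statistic map, $T_{[1-\alpha]}^n\Rightarrow Q_w$, the $\lceil(1-\alpha)w\rceil$-th order statistic of $(W_1,\dots,W_w)$, while $T_1^n\Rightarrow W_1$. The portmanteau lemma (applied to the closed sets $\{|t-c|\ge\delta\}$ and $\{|t-c|\le\delta\}$) then gives $\liminf_{n\to\infty}\mathbb{P}_{\beta^n,\fgamma_0}(\phi_{n,w}=\phi'_n)\ge 1-\mathbb{P}(|Q_w-c|\ge\delta)-\mathbb{P}(|W_1-c|\le\delta)$.

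Step~3 lets $w\to\infty$. The vector $(W_1,\dots,W_w)$ consists of the single a.s.\ finite point $W_1$ together with $w-1$ i.i.d.\ $N(0,\sigma_0^2)$ points, so for $w$ large $Q_w$ lies between the $(\lceil(1-\alpha)w\rceil-1)$-th and $\lceil(1-\alpha)w\rceil$-th order statistics of the i.i.d.\ part alone. Since $\lceil(1-\alpha)w\rceil/(w-1)\to1-\alpha$ and $c$ is the unique $(1-\alpha)$-quantile of $N(0,\sigma_0^2)$, Glivenko--Cantelli together with strict monotonicity of $\Phi$ at $c$ yields $Q_w\to c$ a.s., hence $\mathbb{P}(|Q_w-c|\ge\delta)\to0$. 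Taking $\liminf_{w\to\infty}$ in the bound of Step~2 leaves $1-\mathbb{P}(|W_1-c|\le\delta)$, and since $W_1$ has a density this tends to $1$ as $\delta\downarrow0$. As probabilities are at most $1$, this establishes $\lim_{w\to\infty}\liminf_{n\to\infty}\mathbb{P}(\phi_{n,w}=\phi'_n)=1$.

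The main obstacle is Step~1: carrying the effective-score expansion and the conditional sign-flip CLT from $H_0$ over to the contiguous local alternatives $\beta^n$, and verifying that the flipped corrected statistics $S^{*j}_{\hat{\fgamma}}$ remain asymptotically equal to their $\fgamma_0$-versions and jointly Gaussian with the stated independence structure. Once that is in place, Steps~2--3 are soft (portmanteau, continuous mapping, Glivenko--Cantelli) and require no further model-specific analysis.
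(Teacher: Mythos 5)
Your proposal is correct and follows essentially the same route as the paper: transfer the effective-score expansions to the local alternatives, show $(T_2^n,\dots,T_w^n)$ converges to i.i.d.\ $N(0,\sigma_0^2)$ so that the empirical critical value $T^{n,w}_{[1-\alpha]}$ converges (first in $n$, then in $w$) to $\sigma_0\Phi^{-1}(1-\alpha)$, and conclude from the continuity of the limit law of $T_1^n$. Your $\delta$-interval union bound in Steps 2--3 is just a more explicitly quantified version of the paper's final $\epsilon'/\epsilon''$ argument, and your Step 1 plan matches the paper's $S^{*j\pm}$ decomposition and appeal to the Marohn asymptotics.
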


\subsection{Robustness} \label{robeff}

In Section \ref{secrob} it was explained that the test of Theorem \ref{mainST} is often robust against misspecification of the variance of the score. The test of Theorem \ref{signflipeff} is also robust against certain forms of variance misspecification. An example is the case that $S_{\hat{\fgamma}}$  and $\fatS^{(k-1)}_{\hat{\fgamma}}$ are  misspecified by the same factor, see Proposition \ref{missp}. This happens in particular if the variance is misspecified by a factor which is independent of the covariates.

\begin{proposition} \label{missp}
Suppose that $\Ima= n^{-1}\bm{X}'\hat{\bm{W}}\bm{X}$, where $\bm{X}$ is an $n\times k$ design matrix with i.i.d. rows and $\hat{\bm{W}}$ a weight matrix.
Consider a misspecification factor $c_1>0$ and misspecified scores
$$ \tilde{\nu}_{\hat{\fgamma},i}=c_1{\nu}_{\hat{\fgamma},i}, \quad  \tilde{\bm{\nu}}^{(k-1)}_{\hat{\fgamma},i}= 
c_1 \bm{\nu}^{(k-1)}_{\hat{\fgamma},i},\quad  i=1,...,n.$$
Further, for $c_2>0$ consider the misspecified weight matrix $\tilde{\bm{W}}=c_2 \hat{\bm{W}}.$
Let $\Imat= n^{-1} \bm{X}'\tilde{\bm{W}}\bm{X}$ be the misspecified average Fisher information.
Let $\tilde{\nu}^*_{\hat{\fgamma},i}=  \tilde{\nu}_{\hat{\fgamma},i}-\Imat_{12}'\Imat_{22}^{-1} \tilde{\bm{\nu}}^{(k-1)}_{\hat{\fgamma},i}$ be the misspecified effective scores, $i=1,...,n$.
Consider the test of Theorem \ref{signflipeff}, with  $S^{*j}_{\hat{\fgamma}}$, $j=1,...,w,$  replaced by  the misspecified effective score 
$$\tilde{S}^{*j}_{\hat{\fgamma}}= n^{-1/2} \sum_{i=1}^n g_{ji} \tilde{\nu}^*_{\hat{\fgamma},i}.$$ 
 Under $H_0$, as $n\rightarrow \infty$, the rejection probability  of this test converges to 
$\lfloor \alpha w \rfloor /w \leq \alpha$.
\end{proposition}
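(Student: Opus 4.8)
The plan is to show that the misspecification only rescales the effective scores by a single positive constant, so that the flip test run on $\tilde S^{*j}_{\hat\fgamma}$ makes literally the same decisions as the flip test run on $S^{*j}_{\hat\fgamma}$, whence the conclusion follows from Theorem \ref{signflipeff}.

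First I would record the effect of the weight misspecification on the information matrix. Because the information has the GLM form, $\Imat = n^{-1}\bm{X}'\tilde{\bm{W}}\bm{X} = n^{-1}\bm{X}'(c_2\hat{\bm{W}})\bm{X} = c_2\,\Ima$, so every block of $\Imat$ is $c_2$ times the corresponding block of $\Ima$; in particular $\Imat_{12}=c_2\Ima_{12}$ and $\Imat_{22}=c_2\Ima_{22}$ (and $\Imat_{22}$ is invertible since $\Ima_{22}$ is and $c_2>0$). Hence the scalar $c_2$ cancels in the projection coefficient,
$$\Imat_{12}'\Imat_{22}^{-1} = (c_2\Ima_{12})'(c_2\Ima_{22})^{-1} = \Ima_{12}'\Ima_{22}^{-1}.$$
Substituting this together with $\tilde\nu_{\hat\fgamma,i}=c_1\nu_{\hat\fgamma,i}$ and $\tilde{\fnu}^{(k-1)}_{\hat\fgamma,i}=c_1\fnu^{(k-1)}_{\hat\fgamma,i}$ into the definition of the misspecified effective score gives
$$\tilde\nu^*_{\hat\fgamma,i} = c_1\nu_{\hat\fgamma,i} - \Ima_{12}'\Ima_{22}^{-1}\,c_1\fnu^{(k-1)}_{\hat\fgamma,i} = c_1\,\nu^*_{\hat\fgamma,i},$$
and therefore $\tilde S^{*j}_{\hat\fgamma} = n^{-1/2}\sum_{i=1}^n g_{ji}\tilde\nu^*_{\hat\fgamma,i} = c_1\,S^{*j}_{\hat\fgamma}$ for every $1\le j\le w$.

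Next I would invoke scale invariance of the flip test. The test of Theorem \ref{signflipeff} (Theorem \ref{mainST} applied with $T_j^n=S^{*j}_{\hat\fgamma}$) rejects if and only if $T_1^n$ exceeds its $\lceil(1-\alpha)w\rceil$-th order statistic, so its decision depends on $(T_1^n,\dots,T_w^n)$ only through the rank of $T_1^n$ within that vector. Multiplying all of $T_1^n,\dots,T_w^n$ by the common positive constant $c_1$ preserves this ranking, hence the test built from $\tilde S^{*1}_{\hat\fgamma},\dots,\tilde S^{*w}_{\hat\fgamma}$ gives the same accept/reject outcome as the test built from $S^{*1}_{\hat\fgamma},\dots,S^{*w}_{\hat\fgamma}$ for every realization of the data and of $g_2,\dots,g_w$. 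Consequently its rejection probability under $H_0$ is identical to that of the test in Theorem \ref{signflipeff} — whose hypotheses are in force by assumption for the correctly specified effective scores — and therefore converges to $\lfloor\alpha w\rfloor/w\le\alpha$ as $n\to\infty$.

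I do not expect a substantial obstacle here: the argument reduces to the cancellation of $c_2$ in $\Imat_{12}'\Imat_{22}^{-1}$ and the scale invariance of the ranking-based test. The only point needing care is the first step — ensuring that the block structure of the misspecified information genuinely inherits the single factor $c_2$ — which is precisely why the proposition assumes the form $\Ima=n^{-1}\bm{X}'\hat{\bm{W}}\bm{X}$; for a generic consistent estimator of the information, rescaling the weights need not rescale the estimator, and the cancellation would break down.
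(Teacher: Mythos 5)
Your proposal is correct and matches the paper's own proof essentially step for step: the cancellation of $c_2$ in $\Imat_{12}'\Imat_{22}^{-1}$, the identity $\tilde{S}^{*j}_{\hat{\fgamma}} = c_1 S^{*j}_{\hat{\fgamma}}$, and the invariance of the rank-based rejection rule under multiplication of all $T_j^n$ by a common positive constant. The only cosmetic difference is that you carry out the computation at the level of the individual contributions $\nu^*_{\hat{\fgamma},i}$ before summing, whereas the paper works directly with the aggregated statistics.
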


Proposition \ref{missp} is useful, since it tells us that if in a GLM $var(Y_i)$ is misspecified by a constant, such that $\hat{\bm{W}}$ and the scores are misspecified by a constant, the resulting test is still asymptotically exact.
In Proposition \ref{missp} we assume that the misspecification factors of the weights and the scores are the same for all observations.
This is satisfied for example when the model is binomial or Poisson, but the true distribution is respectively quasi-binomial or quasi-Poisson.
Moreover, in practice the test can be very robust against heteroscedasticity (see Section \ref{simulations}). The variance misspecification  is not generally allowed to depend on the covariates, since then   $S_{\hat{\fgamma}}$ and $\fatS^{(k-1)}_{\hat{\fgamma}}$ can be misspecified by different factors asymptotically. There are exceptions however, see Sections \ref{hetbf} and \ref{simulations}.

When there are estimated nuisance parameters, one can sometimes nevertheless decide to use the test of Theorem \ref{mainST} with the basic scores $\nu_{\hat{\gamma},i}$ plugged in (rather than using effective scores). Indeed, this test has been shown to be very robust to misspecification, as long as $\mathbb{E}\nu_{\hat{\gamma},i}=0$, $1\leq i \leq n$.
It is asymptotically conservative if  the score $S_{\fgamma_0}$ is correlated with the nuisance scores $\fatS^{(k-1)}_{\fgamma_0}$, i.e., when  $\Imag_{12}\neq 0$. Hence, when using this test, it can be useful to redefine the covariates  such that $\Imag_{12}=0$ \citep[as in][]{cox1987parameter}.
When  $\hat{\bm{W}}=b\bm{I}$, $b>0$, this means ensuring  that the nuisance covariates are orthogonal to the covariate of interest.
If the model is potentially misspecified, then the weights and hence $\Imag_{12}$ are not asymptotically known, but the user could substitute a best guess for the weights.

\subsection{An example} \label{hetbf}
As discussed,  the test of Theorem \ref{signflipeff} is not generally asymptotically exact if the variance misspecification depends on the covariates. An important exception is the case where 
the model is
\begin{equation} \label{simplelinmodel}
Y_i \sim N (\gamma_0+\beta X_i, \sigma^2) \quad i=1,...,n,
\end{equation}
where $\gamma_0$ is the unknown intercept and $X_i\in \mathbb{R}$ . If the null hypothesis is $H_0:\beta=\beta_0$, then $\gamma_0$ is a nuisance parameter that needs to be estimated. (We do not need to know $\sigma$ and can simply substitute $1$ for it.) Hence, we compute the effective score. Note that  for $1\leq i \leq n$,
\begin{align*}
\nu_{\hat{\fgamma},i} &=   {x_i(y_i-\hat{\mu}_i)}/ \sigma^{2},\\
\fnu^{(k-1)}_{\hat{\fgamma},i} & =   {(y_i-\hat{\mu}_i)}/\sigma^{2}.
\end{align*}
Note that we can consistently estimate 
$\Imag_{12}\Imag_{22}^{-1}$ by   $\overline{x} =\frac{1}{n} \sum_{i=1}^n x_i$, so that the effective score contributions are
$$\nu^*_{\hat{\fgamma},i}=  (x_i-\overline{x})  (y_i-\hat{\mu}_i)/\sigma^{2}.$$

Thus, the effective score contributions are exactly the basic score contributions after centering $x_1,...,x_n$ around $0$. Similarly, if  $x_1,...,x_n$ are already centered, then  $\nu_{\hat{\fgamma},i}$ and $\nu^*_{\hat{\fgamma},i}$  coincide, since then $\Ima_{12}=0$.

The test of Theorem \ref{signflipeff} is not always asymptotically exact if $S_{\hat{\fgamma}}$ and $\fatS^{(k-1)}_{\hat{\fgamma}}$ are misspecified by different factors. However, if $\Ima_{12}=0$, then this does not apply anymore. The test of Theorem \ref{signflipeff} then remains asymptotically exact and reduces to the test based on the basic score.
For the model \eqref{simplelinmodel}, this means that even if the misspecification of $var(Y_i)$ depends on $X_i$, we obtain an asymptotically exact test.

A particular case where this principle applies is the generalized Behrens-Fisher problem, where the aim is to test equality of the means $\mu^1$ and $\mu^2$ of two populations (or to test if  $\mu^1 \leq \mu^2$ or $\mu^1 \geq \mu^2$). 
In this problem, it is only assumed that two independent samples from these populations are available, without making other assumptions such as equal variances. 
It is well-known that this problem has no exact solution under normality \citep{pesarin2010permutation, lehmann2005testing}.
Under mild assumptions, we  obtain an asymptotically exact test for this problem. \citet{pesarin2010permutation} already suggested sign-flipping residuals to solve this problem. This is equivalent to flipping scores in our linear model \eqref{simplelinmodel} if $|x_1|=...=|x_n|$.

\section{Multi-dimensional parameter of interest} \label{secmv}

Until now  we have considered hypotheses about a one-dimensional parameter $\beta\in \mathbb{R}$. Here we extend our results to hypotheses about a multi-dimensional parameter $\fbeta\in \mathbb{R}^d$, $d\in\mathbb{N}$. 
Our tests are defined even if $d> n$, 
but in the theoretical results that follow we consider $d$ fixed and let $n$ increase to infinity. 
The extension to multi-dimensional $\fbeta$ shares important characteristics with the test for a one-dimensional parameter, such as robustness and asymptotic equivalence with the  parametric score test.

\subsection{Asymptotically exact test}

Our tests below are related to the existing nonparametric combination (NPC) methodology \citep{pesarin2001multivariate, pesarin2010permutation,pesarin2010finite}. This is a very general permutation-based methodology that allows combining test statistics for many hypotheses into a single test of the intersection hypothesis. NPC can be extended to the score-flipping framework. Our tests below could be considered a special case of such an extension of the NPC methodology.  This special case has certain power-optimality properties, discussed below.

The parametric score test  has a classical extension to a hypothesis on a multi-dimensional parameter, $H_0:\fbeta =\fbeta_0\in \mathbb{R}^d$ \citep{rao1948large}. 
We will extend our test in an analogous way. We first assume the nuisance $\fgamma_0\in \mathbb{R}^{k-d}$ to be known.
Since $\fbeta\in \mathbb{R}^d$, the score is $\fatS_{\fgamma_0}=n^{-1/2}\sum_{i=1}^n \fnu_{\fgamma_0,i}\in \mathbb{R}^d$, 
where
$$\fnu_{\fgamma_0,i}= \frac{\partial }{\partial \fbeta } \log f_{\fbeta,{\fgamma_0},\bm{X}_i}(Y_i)\vert_{\fbeta=\fbeta_0}\in \mathbb{R}^d,$$  $1\leq i \leq n$, 
which are now $d$-vectors.   We assume the derivatives exist. 
About the elements of $\fnu_{\fgamma,i}$ (and the nuisance scores considered later) we make the assumptions which are analogous to the earlier assumptions about  $\nu_{\fgamma,i}$.

Let $\Ima_{11}$ to be a consistent estimate of  $\Imag_{11}$, the $d\times d$ Fisher information for $\fbeta\in \mathbb{R}^d$. 
Rao's classical  statistic for testing $H_0:\fbeta =\fbeta_0 \in \mathbb{R}^d$ is 
$$
\fatS_{\fgamma_0}'\Ima_{11}^{-1}\fatS_{\fgamma_0}= \Big ( n^{-1/2}\sum_{i=1}^n \fnu'_{\fgamma_0,i} \Big ) \Ima_{11}^{-1} \Big (  n^{-1/2}\sum_{i=1}^n \fnu_{\fgamma_0,i}\Big ),
$$
which asymptotically has a $\chi^2_d$ distribution under $H_0$.

Instead of requiring a matrix $\Ima^{-1}$ which converges to the inverse of the Fisher Information, in our test that follows we allow to replace the Fisher information by any random matrix  $\hat{\bm{V}}$ converging to some  non-zero matrix $ \bm{V}$. That is, we do not require the Fisher Information to be asymptotically known, just like in the one-dimensional case.
The matrix $\bm{V}$ can be any matrix of preference, including $\Imag_{11}^{-1}$ (if $\Imag_{11}$ is invertible), or we can take $\hat{\bm{V}}=\bm{V}=\bm{I}$. We will discuss different choices of  $\bm{V}$ shortly.

\begin{theorem} \label{mvknownn}
The result of Theorem \ref{mainST} still applies if for $1\leq j \leq w$ we define
$$
T_j^n=  
\Big (n^{-1/2}\sum_{i=1}^n g_{ji}\fnu'_{\fgamma_0,i}  \Big ) \hat{\bm{V}}   \Big ( n^{-1/2}\sum_{i=1}^n g_{ji}\fnu_{\fgamma_0,i}  \Big ).
$$
\end{theorem}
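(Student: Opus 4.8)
The plan is to reduce the claim to the one‑dimensional machinery already behind Theorem~\ref{mainST}. That theorem's rejection‑probability conclusion is driven purely by the fact that $(T_1^n,\dots,T_w^n)$ becomes, in the limit, a vector of i.i.d.\ random variables with a common non‑atomic law, together with the identity $\lceil(1-\alpha)w\rceil=w-\lfloor\alpha w\rfloor$. So it suffices to establish the same limiting structure for the present quadratic‑form statistics. Write $\fatS_j^n=n^{-1/2}\sum_{i=1}^n g_{ji}\fnu_{\fgamma_0,i}\in\mathbb R^d$, so that $T_j^n=(\fatS_j^n)'\hat{\bm V}\fatS_j^n$; since $\fatS'\bm V\fatS=\fatS'\tfrac12(\bm V+\bm V')\fatS$ we may assume $\hat{\bm V},\bm V$ symmetric. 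Note that here the limit of the $T_j^n$ is not Gaussian, so only the rejection‑probability part of Theorem~\ref{mainST} transfers, not the asymptotic normality of the $T_j^n$ themselves.

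\textbf{Step 1 (joint CLT for the flipped score vectors).} First I would show $\big((\fatS_1^n)',\dots,(\fatS_w^n)'\big)'\overset{d}{\to}(\bm G_1,\dots,\bm G_w)$ with $\bm G_1,\dots,\bm G_w$ i.i.d.\ $N(\bm0,\bm\Sigma)$, where $\bm\Sigma=\lim_n n^{-1}\sum_i\mathrm{var}(\fnu_{\fgamma_0,i})$; existence of this non‑singular limit, together with a coordinatewise Lindeberg condition, is the multivariate analogue of Assumption~\ref{as:nu} invoked in the text. By the Cramér--Wold device it is enough to treat, for arbitrary fixed $\bm a_1,\dots,\bm a_w\in\mathbb R^d$, the scalar array $\xi_{ni}=\sum_{j=1}^w g_{ji}\,\bm a_j'\fnu_{\fgamma_0,i}$, which is independent across $i$ (the $g_{ji}$ are drawn independently of the data, $g_{1i}\equiv1$, and the $(\bm X_i,Y_i)$ are independent). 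Under $H_0$ we have $\mathbb E\xi_{ni}=0$; since $\mathbb E(g_{ji}g_{j'i})=0$ whenever $j\neq j'$ and the signs are independent of $\fnu_{\fgamma_0,i}$, the cross terms vanish and $\mathrm{var}(\xi_{ni})=\sum_j\bm a_j'\,\mathrm{var}(\fnu_{\fgamma_0,i})\,\bm a_j$, so $n^{-1}\sum_i\mathrm{var}(\xi_{ni})\to\sum_j\bm a_j'\bm\Sigma\bm a_j$. Finally $|\xi_{ni}|\le\big(\sum_j\|\bm a_j\|\big)\|\fnu_{\fgamma_0,i}\|$, so the Lindeberg condition for $n^{-1/2}\sum_i\xi_{ni}$ follows from the coordinatewise Lindeberg condition on $\fnu_{\fgamma_0,i}$. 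Lindeberg--Feller then yields $\sum_j\bm a_j'\fatS_j^n\overset{d}{\to}N\big(0,\sum_j\bm a_j'\bm\Sigma\bm a_j\big)$, which is exactly the law of $\sum_j\bm a_j'\bm G_j$; this proves the joint convergence, including in particular the asymptotic independence of the unflipped $\fatS_1^n$ from the rest (the $j=1$ versus $j'\ge2$ cross terms drop out in the same computation).

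\textbf{Step 2 (quadratic forms and the ranking argument).} Since $\hat{\bm V}\overset{p}{\to}\bm V$, applying the continuous‑mapping and Slutsky theorems to $(\fatS_1^n,\dots,\fatS_w^n,\hat{\bm V})$ gives $(T_1^n,\dots,T_w^n)\overset{d}{\to}(\bm G_1'\bm V\bm G_1,\dots,\bm G_w'\bm V\bm G_w)=:(Q_1,\dots,Q_w)$, which are i.i.d., each a fixed continuous function of an independent copy of $\bm G\sim N(\bm0,\bm\Sigma)$. Under the maintained non‑degeneracy ($\bm\Sigma$ non‑singular and $\bm V$ non‑zero with non‑zero symmetric part, so the test is non‑trivial) each $Q_j$ is a non‑degenerate linear combination of independent $\chi^2_1$ variables and hence has a continuous distribution, so $\mathbb P(Q_j=Q_k)=0$ for $j\neq k$. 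The event $\{T_1^n>T_{[1-\alpha]}^n\}=\{T_1^n>T_{(\lceil(1-\alpha)w\rceil)}^n\}$ depends only on the ranking of $(T_1^n,\dots,T_w^n)$, and the topological boundary of this region in $\mathbb R^w$ lies in $\{t_1=t_j\text{ for some }j\}$, which has probability zero under the limit; by the portmanteau theorem the rejection probability converges to $\mathbb P\big(Q_1>Q_{(\lceil(1-\alpha)w\rceil)}\big)$, and by exchangeability of $(Q_1,\dots,Q_w)$ this equals $(w-\lceil(1-\alpha)w\rceil)/w=\lfloor\alpha w\rfloor/w$, as claimed. (If one only wants the bound $\le\alpha$, non‑atomicity is not needed.)

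\textbf{Expected main obstacle.} The real content is Step~1: obtaining the \emph{joint} CLT across the $w$ sign‑flips — equivalently the asymptotic independence of the flipped (and unflipped) score vectors — and verifying that the Lindeberg condition is inherited by arbitrary sign‑weighted linear combinations of the coordinates of $\fnu_{\fgamma_0,i}$. Once joint asymptotic normality with common covariance $\bm\Sigma$ is in hand, the remaining steps (Slutsky for $\hat{\bm V}$, non‑atomicity of the limiting quadratic form, the exchangeability/ranking count) are routine. A minor point is excluding a degenerate limiting $\bm V$ (for instance a purely skew‑symmetric $\bm V$, for which $T_j^n\equiv0$); this is handled by reading "non‑zero matrix $\bm V$" as "non‑trivial test", and it affects only the exact value $\lfloor\alpha w\rfloor/w$ rather than the bound $\alpha$.
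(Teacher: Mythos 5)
Your proposal is correct and follows essentially the same route as the paper's proof: a joint multivariate CLT showing the $w$ flipped score vectors converge to i.i.d.\ Gaussian columns, followed by the continuous mapping theorem applied to the quadratic forms and the Monte-Carlo/ranking argument of Lemma~\ref{mc}. You simply fill in details the paper leaves implicit (the Cram\'er--Wold reduction, Slutsky for $\hat{\bm V}\to\bm V$, and the non-atomicity of the limiting quadratic form), which are welcome but not a different method.
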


In case the nuisance parameter $\fgamma_0$ is unknown and we have a $\sqrt{n}$-consistent estimate $\hat{\fgamma}$, we can use the same test, but with effective scores instead of  basic scores plugged in. See Theorem \ref{mvunknownn}.
For multi-dimensional $\fbeta$, the effective score contributions are 
 $$\fnu^*_{\hat{\fgamma},i} = \fnu_{\hat{\fgamma},i}  -\Ima_{12}'\Ima_{22}^{-1} \fnu^{(k-d)}_{\hat{\fgamma},i}\in \mathbb{R}^d,$$
 $1\leq i \leq n$, where  $$\fnu^{(k-d)}_{\hat{\fgamma},i}=
\frac{\partial }{\partial \fgamma} \log f_{\fbeta_0,{\fgamma},\bm{X}_i}({Y_i}) \Big\vert_{\fgamma=\hat{\fgamma}}\in\mathbb{R}^{k-d}.$$
Here $\Ima_{12}$ and $\Ima_{22}$ are $(k-d)\times d$ and $(k-d)\times (k-d)$ matrices, respectively.

\begin{theorem}[Unknown nuisance] \label{mvunknownn}
The result of Theorem \ref{mainST} still applies if for $1\leq j \leq w$ we define
$$
T_j^n=  \Big (n^{-1/2}\sum_{i=1}^n g_{ji}\fnu_{\hat{\fgamma},i}^{*'}  \Big ) \hat{\bm{V}}   \Big ( n^{-1/2}\sum_{i=1}^n g_{ji}\fnu^*_{\hat{\fgamma},i}  \Big ).
$$
\end{theorem}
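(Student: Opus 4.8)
The plan is to reduce the claim, via an effective-score expansion, to the already-proved known-nuisance result of Theorem~\ref{mvknownn} applied to the \emph{population} effective score contributions $\fnu^*_{\fgamma_0,i}=\fnu_{\fgamma_0,i}-\Imag_{12}'\Imag_{22}^{-1}\fnu^{(k-d)}_{\fgamma_0,i}$ (these use the population information $\Imag$ in place of $\Ima$). Write $\fatS^{*j}_{\hat{\fgamma}}=n^{-1/2}\sum_{i=1}^n g_{ji}\fnu^*_{\hat{\fgamma},i}$ and $\fatS^{*j}_{\fgamma_0}=n^{-1/2}\sum_{i=1}^n g_{ji}\fnu^*_{\fgamma_0,i}$, so that $T_j^n=(\fatS^{*j}_{\hat{\fgamma}})'\hat{\bm{V}}\fatS^{*j}_{\hat{\fgamma}}$. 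The first step is to establish that for each fixed $1\leq j\leq w$,
$$\fatS^{*j}_{\hat{\fgamma}}=\fatS^{*j}_{\fgamma_0}+o_{\mathbb{P}_{\fbeta_0,\fgamma_0}}(1),$$
which is the $d$-vector version of the identity $S^*_{\hat{\fgamma}}=S^*_{\fgamma_0}+o_{\mathbb{P}}(1)$ proved for Theorem~\ref{signflipeff}, carried out coordinate by coordinate. For $j=1$ (no flip) this is the Marohn-type argument used there: the multi-dimensional analogues of the stochastic expansions assumed before Theorem~\ref{signflipeff} give $\fatS_{\hat{\fgamma}}$ and the nuisance-score vector drift terms proportional to $\sqrt n(\hat{\fgamma}-\fgamma_0)$ that cancel once the effective score is formed, using $\sqrt n$-consistency of $\hat{\fgamma}$ and consistency of $\Ima$ for $\Imag$. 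For $j\geq2$ no cancellation is needed: a first-order Taylor expansion of $\fnu_{\hat{\fgamma},i}$ and $\fnu^{(k-d)}_{\hat{\fgamma},i}$ about $\fgamma_0$ produces a drift term of the form $\big(n^{-1}\sum_{i=1}^n g_{ji}\,\partial_{\fgamma}\fnu_{\fgamma_0,i}\big)\sqrt n(\hat{\fgamma}-\fgamma_0)$, and since the flips $g_{ji}$ are independent of the data with mean $0$ the bracketed average has mean $0$ and variance $O(n^{-1})$, hence is $o_{\mathbb{P}}(1)$; multiplying by $\sqrt n(\hat{\fgamma}-\fgamma_0)=O_{\mathbb{P}}(1)$ leaves $o_{\mathbb{P}}(1)$, and the analogous bound for the nuisance-score terms gives $\fatS^{*j}_{\hat{\fgamma}}=\fatS^{*j}_{\fgamma_0}+o_{\mathbb{P}}(1)$.

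Next I would pass from the score vectors to the quadratic forms. Since $\hat{\bm{V}}$ converges in probability to the nonzero matrix $\bm{V}$, and each $\fatS^{*j}_{\fgamma_0}$ is $O_{\mathbb{P}}(1)$ --- indeed asymptotically multivariate normal, the coordinates of $\fnu^*_{\fgamma_0,i}$ being independent across $i$ (functions of the independent pairs $(\bm{X}_i,Y_i)$), having mean $\bm{0}$ under $H_0$ because both $\fnu_{\fgamma_0,i}$ and the nuisance score $\fnu^{(k-d)}_{\fgamma_0,i}$ do, and having finite variance, so that the Lindeberg- and variance-convergence analogues of Assumption~\ref{as:nu} hold trivially in this i.i.d.\ setting --- the first step and continuity of $(\bm{u},\bm{M})\mapsto\bm{u}'\bm{M}\bm{u}$ give, for every $j$, $T_j^n=(\fatS^{*j}_{\fgamma_0})'\bm{V}\fatS^{*j}_{\fgamma_0}+o_{\mathbb{P}_{\fbeta_0,\fgamma_0}}(1)=:\tilde{T}_j^n+o_{\mathbb{P}_{\fbeta_0,\fgamma_0}}(1)$. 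The $\tilde{T}_1^n,\dots,\tilde{T}_w^n$ are exactly the statistics of Theorem~\ref{mvknownn} built from $\fnu^*_{\fgamma_0,i}$ (with $\bm{V}$ in the role of $\hat{\bm{V}}$), whose hypotheses were just verified; hence by Theorem~\ref{mvknownn} the test rejecting when $\tilde{T}_1^n>\tilde{T}^n_{[1-\alpha]}$ has rejection probability tending to $\lfloor\alpha w\rfloor/w\leq\alpha$ under $H_0$, and (as in that proof) the limiting joint law of $(\tilde{T}_1^n,\dots,\tilde{T}_w^n)$ places no mass on the sets where two coordinates coincide, so $\mathbb{P}(\tilde{T}_j^n=\tilde{T}_k^n)\to0$ for $j\neq k$. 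A routine Slutsky-type perturbation argument then shows that replacing $\tilde{T}_j^n$ by $T_j^n=\tilde{T}_j^n+o_{\mathbb{P}}(1)$ does not alter the limiting probability of $\{T_1^n>T^n_{[1-\alpha]}\}$, which therefore also converges to $\lfloor\alpha w\rfloor/w$, as claimed.

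I expect the main obstacle to be making the first step --- $\fatS^{*j}_{\hat{\fgamma}}=\fatS^{*j}_{\fgamma_0}+o_{\mathbb{P}}(1)$ for every $j$ --- fully rigorous: one must bound the second-order Taylor remainders (involving $\partial^2/\partial\fgamma\,\partial\fgamma'$ of the log-density) uniformly over a $\sqrt n$-shrinking neighbourhood of $\fgamma_0$, and check that the law of large numbers for the sign-flipped Jacobian averages $n^{-1}\sum_{i=1}^n g_{ji}\,\partial_{\fgamma}\fnu_{\fgamma_0,i}$ holds jointly with the stochastic behaviour of $\hat{\fgamma}$. Granting the regularity already imposed for Theorems~\ref{signflipeff} and~\ref{mvknownn}, the remainder is Slutsky's theorem together with a direct appeal to Theorem~\ref{mvknownn}.
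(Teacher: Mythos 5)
Your proposal is correct and follows essentially the same route as the paper's proof: reduce to the known-nuisance case (Theorem \ref{mvknownn}) by showing the flipped effective scores at $\hat{\fgamma}$ are asymptotically equivalent to those at $\fgamma_0$ (the argument of Theorem \ref{signflipeff}), then conclude via the continuous mapping theorem and Lemma \ref{mc}. The paper's own proof is a two-line reference to those earlier proofs; yours simply fills in the same reduction in detail (your mean-zero flipped-Jacobian argument for $j\geq 2$ is a minor repackaging of the paper's $S^{j+}/S^{j-}$ drift-cancellation step).
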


The  test of Theorem \ref{mvunknownn} is asymptotically equivalent to its parametric counterpart, as Proposition \ref{aseqmv} states. In particular, if we take $\hat{\bm{V}}=(\Ima^{*})^{-1}$, where $(\Ima^{*})^{-1}$ is a consistent estimate of the inverse of the effective Fisher information, then the test of Theorem \ref{mvunknownn} is asymptotically equivalent to the parametric score test \citep[][p. 86]{hall1990large}.

\begin{proposition}[Equivalence with parametric counterpart] \label{aseqmv}
Define $T_j^n$ as in Theorem \ref{mvunknownn}, $1\leq j \leq w$.
Let $\bm{\xi}\in \mathbb{R}^d$ and suppose the true value of the parameter of interest is $\bm{\beta}=\bm{\beta}^n=\bm{\beta}_0+n^{-1/2}\bm{\xi}$.
Let $\phi_{n,w}=\mathbbm{1}_{\{T_1^n>T_{[1-\alpha]}^n\}}$. This is the test of Theorem \ref{mvunknownn}.
Let   $\phi'_{n}$ be the parametric  test $\mathbbm{1}_{\{T_1^n>q_{\alpha}\}}$, where $q_{\alpha}$ is the $(1-\alpha)$-quantile  of the  distribution to which $T_1^n$ converges as $n\rightarrow\infty$  under $\bm{\beta}=\bm{\beta}_0$. (This is the $\chi^2_d$ distribution if $\bm{V}$ is the inverse of the effective information matrix $\Imag^*$).
Then $\lim_{w\rightarrow\infty}\liminf_{n\rightarrow\infty}\mathbb{P}(\phi_{n,w} =\phi'_n)=1.$
\end{proposition}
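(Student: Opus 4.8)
\emph{Proof strategy.} The plan is to follow the proof of Proposition \ref{aseqeff}, with the scalar sign-flipped score statistics replaced by the quadratic forms of Theorem \ref{mvunknownn}. Write $W_j^n = n^{-1/2}\sum_{i=1}^n g_{ji}\,\fnu^*_{\hat\fgamma,i}\in\mathbb{R}^d$, so that $T_j^n = (W_j^n)'\hat{\bm V}\,W_j^n$; replacing $\bm V$ by its symmetric part does not change any $T_j^n$, and if that part vanishes the claim is trivial, so we may take $\bm V$ symmetric and nonzero. The argument splits into (i) the joint limiting law of $(W_1^n,\dots,W_w^n)$ under $\mathbb{P}_{\bm\beta^n,\fgamma_0}$; (ii) transferring it to $(T_1^n,\dots,T_w^n)$; and (iii) showing that the random cut-off $T_{[1-\alpha]}^n$ converges to the deterministic parametric cut-off $q_\alpha$ in the iterated limit, whence $\mathbb{P}(\phi_{n,w}=\phi'_n)\to1$.

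For (i) I would reuse from the proof of Theorem \ref{mvunknownn} the effective-score expansion $\fatS^*_{\hat\fgamma}=\fatS^*_{\fgamma_0}+o_{\mathbb{P}_{\bm\beta_0,\fgamma_0}}(1)$, which holds under the assumed stochastic expansions of the score and the nuisance score; by contiguity of $\{\mathbb{P}_{\bm\beta^n,\fgamma_0}\}$ to $\{\mathbb{P}_{\bm\beta_0,\fgamma_0}\}$ it persists under the local alternative, where $\fatS^*_{\fgamma_0}\to_d N(\Imag^*\bm\xi,\Imag^*)$ by the multivariate CLT for the i.i.d.\ summands $\fnu^*_{\fgamma_0,i}$; hence $W_1^n\to_d N(\Imag^*\bm\xi,\Imag^*)$. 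For $2\le j\le w$ I would instead condition on the data $\{(\bm X_i,Y_i)\}_{i=1}^n$: given the data, $W_j^n$ is a sum of independent mean-zero vectors $g_{ji}\fnu^*_{\hat\fgamma,i}$ with conditional covariance $n^{-1}\sum_i\fnu^*_{\hat\fgamma,i}(\fnu^*_{\hat\fgamma,i})'\to_\mathbb{P}\Imag^*$ (law of large numbers, $\sqrt n$-consistency of $\hat\fgamma$ and continuity of $\Imag$, carried over to the local alternative by contiguity), so a conditional Lindeberg--Feller CLT, whose Lindeberg condition holds by the i.i.d.\ structure and finite second moments, gives $W_j^n\mid\text{data}\to_d N(\bm 0,\Imag^*)$. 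Since this conditional limit is non-random and $g_2,\dots,g_w$ are mutually independent and independent of the data, $(W_1^n,\dots,W_w^n)$ converges jointly to independent blocks: $N(\Imag^*\bm\xi,\Imag^*)$ for the first and $w-1$ independent copies of $N(\bm 0,\Imag^*)$ for the rest --- the flips annihilate the local drift carried by $W_1^n$, which is the mechanism behind Theorem \ref{mvunknownn}.

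For (ii)--(iii): joint convergence, Slutsky applied to $\hat{\bm V}\to_\mathbb{P}\bm V$, and the continuous mapping theorem give $(T_1^n,\dots,T_w^n)\to_d(W_1'\bm VW_1,\dots,W_w'\bm VW_w)$ with $W_1\sim N(\Imag^*\bm\xi,\Imag^*)$ and $W_2,\dots,W_w$ i.i.d.\ $N(\bm 0,\Imag^*)$, all independent. Because $\Imag^*$ is positive definite and $\bm V\neq\bm 0$, the polynomial $w\mapsto w'\bm Vw$ is non-constant, its level sets are Lebesgue-null, and each $W_j'\bm VW_j$ has a continuous law; writing $F$ for the common law of $W_2'\bm VW_2,\dots,W_w'\bm VW_w$ --- which is also the law to which $T_1^n$ converges under $\bm\beta_0$, so that $q_\alpha=F^{-1}(1-\alpha)$ is an interior point of strict increase of $F$. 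The cut-off $T_{[1-\alpha]}^n=T^n_{(\lceil(1-\alpha)w\rceil)}$ is a continuous function of $(T_1^n,\dots,T_w^n)$, so for fixed $w$ it converges in distribution to the corresponding order statistic of the limiting sample, and letting $w\to\infty$ the usual empirical-quantile argument --- the single non-exchangeable term $T_1^n$ alters the rank by at most one, a vanishing fraction of $w$ --- yields $\limsup_n\mathbb{P}(|T_{[1-\alpha]}^n-q_\alpha|\ge\delta)\to0$ as $w\to\infty$, for every $\delta>0$. Finally, on $\{\phi_{n,w}\neq\phi'_n\}$ the statistic $T_1^n$ lies between $T_{[1-\alpha]}^n$ and $q_\alpha$, hence $|T_1^n-q_\alpha|\le|T_{[1-\alpha]}^n-q_\alpha|$, so for any $\delta>0$
\[
\mathbb{P}(\phi_{n,w}\neq\phi'_n)\le\mathbb{P}\big(|T_1^n-q_\alpha|\le\delta\big)+\mathbb{P}\big(|T_{[1-\alpha]}^n-q_\alpha|\ge\delta\big);
\]
taking $\limsup_n$ (using $T_1^n\to_d W_1'\bm VW_1$, which has no atom at $q_\alpha$), then $\lim_w$, then $\delta\downarrow0$ drives the right-hand side to $0$, which is the assertion $\lim_{w\to\infty}\liminf_{n\to\infty}\mathbb{P}(\phi_{n,w}=\phi'_n)=1$.

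I expect the main obstacle to be the same as for the one-dimensional Proposition \ref{aseqeff}: controlling the nuisance-estimation error \emph{under the local alternative} $\bm\beta^n$ rather than only under the null, so that $W_1^n$ has the clean limit $N(\Imag^*\bm\xi,\Imag^*)$ --- this is precisely where contiguity and the assumed expansions of $\fatS_{\hat\fgamma}$ and the nuisance score $n^{-1/2}\sum_i\fnu^{(k-d)}_{\hat\fgamma,i}$ are needed. A lesser point is verifying the absence of atoms of the limiting quadratic-form laws, which underlies both the empirical-quantile convergence and the final sandwich bound and follows from positive-definiteness of $\Imag^*$ and $\bm V\neq\bm 0$.
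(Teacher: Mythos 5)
Your proof is correct and its overall architecture matches the paper's: obtain the joint limit of $(T_1^n,\dots,T_w^n)$ under $\fbeta=\fbeta^n$ with the flipped statistics i.i.d.\ and following the null limit law of $T_1^n$, show that the empirical critical value $T^n_{[1-\alpha]}$ concentrates at $q_\alpha$ in the iterated limit $n\to\infty$ then $w\to\infty$, and conclude that the two indicators agree because the limit law of $T_1^n$ has no atom at $q_\alpha$. The one place where you take a genuinely different route is the limit of the flipped vectors $W_j^n$, $2\le j\le w$: the paper splits each flipped sum into its positively- and negatively-signed halves (as in the proof of Proposition \ref{aseqeff}), uses the assumed stochastic expansions to replace $\hat{\fgamma}$ by $\fgamma_0$ in each half, and invokes the cited result that the halves are asymptotically independent $N(\tfrac12\Imag^*\bm{\xi},\tfrac12\Imag^*)$, so their difference is $N(\bm{0},\Imag^*)$; you instead condition on the data and apply a conditional Lindeberg--Feller CLT to the Rademacher-weighted sum. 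Both mechanisms are sound. The paper's route relies only on the already-assumed expansions, which are stated directly under $\mathbb{P}_{\beta^n,\fgamma_0}$, so your appeal to contiguity, while correct, is not needed; your route additionally requires a law of large numbers for $n^{-1}\sum_{i=1}^n \fnu^*_{\hat{\fgamma},i}(\fnu^*_{\hat{\fgamma},i})'$ with the estimated nuisance inside, a mild regularity condition not among the paper's stated assumptions, but in exchange it makes transparent why the flips annihilate the local drift and delivers the joint independence of $W_2^n,\dots,W_w^n$ in one stroke rather than via a separate multivariate CLT for the $d\times(w-1)$ array. Your closing sandwich bound $\mathbb{P}(\phi_{n,w}\neq\phi'_n)\le\mathbb{P}(|T_1^n-q_\alpha|\le\delta)+\mathbb{P}(|T^n_{[1-\alpha]}-q_\alpha|\ge\delta)$ is a more explicit version of the paper's final step and is justified by your observation that $W_1'\bm{V}W_1$ is atomless when $\Imag^*$ is positive definite and $\bm{V}\neq\bm{0}$, a nondegeneracy point the paper leaves implicit.
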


We have seen that the test of Theorem \ref{mainST} is often robust against overdispersion and heteroscedasticity: as long as the score contributions have mean $0$, the test is asymptotically exact, under very mild assumptions. Moreover, it is not required to estimate the Fisher information. The  same applies to the multi-dimensional extension in Theorem \ref{mvknownn}. 

The test that  takes into account nuisance estimation (Theorem \ref{mvunknownn}) uses effective scores, so that it does require  estimating the  information.  However, as in the one-dimensional case, it can be seen that the test remains valid if the information matrix  is asymptotically misspecified by a constant (as in Proposition \ref{missp}). Additional robustness is illustrated with simulations in Section \ref{simmv}.

\subsection{Connection with the global test}

The test of Theorem \ref{mvknownn} is related to the global test, which was developed in \citet{goeman2004global,goeman2006testing,goeman2011testing}. We can combine the global test with the score-flipping approach. In certain cases, the resulting test coincides with the test of Theorem \ref{mvknownn}.

The global test is a parametric test of $H_0$.  For the test to be defined, it is not required that $d\leq n$.  For GLMs with canonical link function, the test statistic of the global test is  
\begin{equation} \label{gttt}
\fatS_{\fgamma_0}' \bm{\Sigma} \fatS_{\fgamma_0},
\end{equation}
with $\bm{\Sigma}$ a freely chosen positive (semi)definite $d \times d$ matrix \citep{goeman2006testing,goeman2011testing}. The choice of $\bm{\Sigma}$ influences the power properties.
 
 Note that when $\hat{\bm{V}} = \bm{\Sigma}$, the statistic \eqref{gttt}  coincides with the statistic of  Theorem \ref{mvknownn}. Thus, it immediately follows from our results that the global test can be combined with our sign-flipping approach, leading to a test which becomes asymptotically exact  as $n\rightarrow\infty$  and asymptotically equivalent to its parametric counterpart, the original global test (by Proposition \ref{aseqmv}). Combining the global test with sign-flipping is useful in the light of our robustness results.  Moreover, the sign-flipping variant  can be combined with existing permutation-based multiple testing methodology \citep{westfall1993resampling, hemerik2018false,hemerik2019permutation}. 
 
\citet{goeman2006testing} provide results on the power properties of the global test as depending on the choice of $\bm{\Sigma}$. Since the global test is asymptotically equivalent to its sign-flipping counterpart, these results can be used as recommendations on the choice of $\hat{\bm{V}}$ in Theorem \ref{mvknownn}. In particular, according to \citet[][Section 8]{goeman2006testing}, taking $\hat{\bm{V}}=\bm{I}$ leads to good power if one expects that relatively much of the variance of $\bm{Y}$ is explained by the large variance principal components of the design matrix. If this is not the case, taking $\hat{\bm{V}}$ to be an estimate of the inverse of the Fisher information (if invertible) can provide better power. 
In general, the global  test has optimal power on average (over $\fbeta$) in a neighbourhood of $\fbeta_0$ that depends on $\bm{\Sigma}$ \citep{goeman2006testing}. Hence the same holds asymptotically for the test of Theorem \ref{mvknownn}, for GLMs with canonical link.

\section{Simulations} \label{simulations}

To compare the tests in this paper with each other and existing tests, we applied them to simulated data. In particular we considered scenarios where the  model was misspecified.  Simulations with a multi-dimensional parameter of interest are in  Section \ref{simmv}.

\subsection{Overdispersion, heteroscedasticity and  estimated nuisance} \label{overdEstnuis}

In Sections  \ref{overdEstnuis} and \ref{secignn} the assumed model was Poisson, but in fact $Y_1,...,Y_n$ were drawn from a negative binomial distribution. 

The covariates $X, Z,Z^l \in \mathbb{R}$ were drawn from a multivariate normal distribution with zero mean and $var(X)=var(Z)=var(Z^l)=1.$ (For nonzero means, similar simulation results were obtained as below.)
The response satisfied 
$\log\{\mathbb{E}(Y_i)\}=\log(\mu_i) =\lp_i= $
$$0+ \beta \cdot  X_i + \gamma_0 \cdot  Z_i + \gamma_0^l \cdot Z_i^l.$$

The null hypothesis was $H_0:\beta=0$.
In Section \ref{overdEstnuis} we took $\gamma_0^l=0$.  The coefficient $\gamma_0$ and the intercept $0$   were nuisance parameters that were estimated by maximum likelihood under $H_0$. We took $\gamma_0=1$ and $\rho(X_i,Z_i)=0.5, $ $\rho(Z_i^l,Z_i)=0$, $\rho(Z_i^l,X_i)=0$.
We took the dispersion parameter of the negative binomial distribution to be $1$, so that $var(Y_i)=\mu_i+\mu_i^2$. 

The assumed model, however, was Poisson, i.e., $var(Y_i) =\mu_i$ was assumed. 
Thus the true variance was larger than the assumed variance and  the variance misspecification factor depended on $\mu_i$, i.e., on the covariate $Z_i$. The assumed log link function was correct and in Section \ref{overdEstnuis} the linear predictor was correct as well.

In Figure \ref{rejproboverdisp} the estimated rejection probabilites of four tests under ${H_0:\beta=0}$ are compared, based on 5000 repeated simulations. In all simulations the tests were two-sided. 

One of the tests considered was the parametric score test. Since the assumed model was Poisson,   the computed Fisher information was too small and the test was anti-conservative.

We also applied a Wald test, where we used a sandwich estimate \citep[][p. 280]{agresti2015foundations} of the variance of $\hat{\beta}$, to correct for the misspecified variance function. We used the R package \emph{gee} for this (available on CRAN), specifying blocks of size $1$. As can be seen in Figure \ref{rejproboverdisp}, this test was rather anti-conservative (especially for small $\alpha$, e.g. $\alpha=0.01$). This was in particular due to the estimation error of the sandwich \citep{boos1992on,freedman2006so,maas2004robustness,kauermann2000sandwich}.

Further, we applied the  sign-flipping test based on the basic scores $\nu_{\hat{\gamma},i}$.   Due to the estimation of $\gamma_0$, the variance of the score was shrunk and the test was conservative, as explained in Section \ref{seceffsctest}.  In the simulations under $H_0$ we took $w=200$. Taking $w$ larger led to a very similar level (see also \citeauthor{marriott1979barnard}, \citeyear{marriott1979barnard}). In the power simulations we took $w=1000.$

Finally, we used the sign-flipping test of Theorem \ref{signflipeff}, which is based on the effective scores $\nu^*_{\hat{\gamma},i}$. 
In Section \ref{robeff} it was already  shown that this test is asymptotically exact under constant variance misspecification. Here, however, the variance misspecification factor was $1+\mu_i$ (i.e., it depended on $Z_i$). Nevertheless the rejection probability under $H_0$ was approximately $\alpha$. This illustrates that the test has some additional robustness, which we have not theoretically shown.


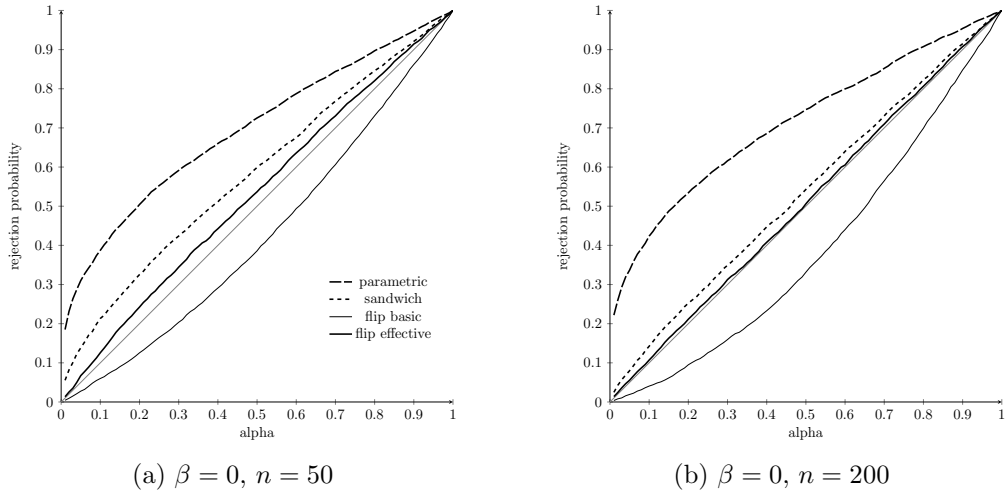
\begin{figure}
\centering
\begin{subfigure}{.5\textwidth}

\centering
\begin{tikzpicture}[scale=0.5]

\begin{axis}[
    anchor=origin,
                legend style={at={(7cm,2.5cm)}, anchor=west, legend columns=1, draw=none},
    width=12cm, height=12cm,
axis x line=bottom, xlabel={alpha}, xmin=0, xmax=1,  
       axis y line=left, ylabel={rejection probability}, ymin=0, ymax=1]

\draw[gray] (0,0) -- (12cm,12cm);

\addplot[dash pattern=on 9pt off 2pt , line width = 1, very thick,samples=500] table[x="alpha",y="par", col sep=comma,smooth]{rejProbs_gamma1_dispexp0_n50.csv}; 

\addplot[ dashed, line width = 1, very thick] table[x="alpha",y="GEE", col sep=comma]{rejProbs_gamma1_dispexp0_n50.csv};

\addplot[ line width = 0.5] table[x="alpha",y="flipSimple", col sep=comma, smooth]{rejProbs_gamma1_dispexp0_n50.csv};  

\addplot[ line width = 1, very thick  ] table[x="alpha",y="flipEff", col sep=comma, smooth]{rejProbs_gamma1_dispexp0_n50.csv};  

\legend{parametric, sandwich, flip basic, flip effective};
\end{axis}
\end{tikzpicture}

\caption{$\beta=0$, $n=50$}
\label{rejproboverdisp2a}

\end{subfigure}
\begin{subfigure}{.5\textwidth}

\centering
\begin{tikzpicture}[scale=0.5]

\begin{axis}[
    anchor=origin,
                legend style={at={(12cm,.7)}, anchor=west, legend columns=1, draw=none},
    width=12cm, height=12cm,
axis x line=bottom, xlabel={alpha}, xmin=0, xmax=1,  
       axis y line=left, ylabel={rejection probability}, ymin=0, ymax=1]

\draw[gray] (0,0) -- (12cm,12cm);

\addplot[dash pattern=on 9pt off 2pt , line width = 1, very thick,samples=500] table[x="alpha",y="par", col sep=comma,smooth]{rejProbs_gamma1_dispexp0_n200.csv}; 

\addplot[ dashed, line width = 1, very thick] table[x="alpha",y="GEE", col sep=comma]{rejProbs_gamma1_dispexp0_n200.csv};

\addplot[ line width = 0.5] table[x="alpha",y="flipSimple", col sep=comma, smooth]{rejProbs_gamma1_dispexp0_n200.csv};  

\addplot[ line width = 1, very thick  ] table[x="alpha",y="flipEff", col sep=comma, smooth]{rejProbs_gamma1_dispexp0_n200.csv};

\end{axis}
\end{tikzpicture}

\caption{$\beta=0$, $n=200$}
\label{rejproboverdisp2b}

\end{subfigure}
\caption{Estimated rejection probabilities for four tests under misspecified variance and estimated nuisance. The null hypothesis was $H_0:\beta=0$. }
\label{rejproboverdisp}
\end{figure}


\subsection{Ignored nuisance} \label{secignn}
The same simulations  were performed as in Section \ref{overdEstnuis}, but with $\gamma_0^l=1$. Since $\gamma_0^l=0$ was assumed,  $Z_i^l$ represented an ignored, latent variable. Figure  \ref{ignn} shows similar results as Figure \ref{rejproboverdisp}. The parametric test was even more anti-conservative than in Section \ref{overdEstnuis}. The reason is that the introduction of $Z_i^l$ increased the variance  $Y_i$, so that the variance of the score was even more misspecified than in Section \ref{overdEstnuis}.

The  test of Theorem \ref{signflipeff} was still nearly exact for $n=200$, even though $\mu_i$ was misspecified. (Even marginally over $Z_i^l$, $\mu_i$ was misspecified. Possibly the estimation of the intercept corrected for the misspecification.)

A conclusion from the simulations of Sections \ref{overdEstnuis} and \ref{secignn}, is that  the sandwich-based approach should not always be seen as the most reliable way of testing models with misspecified variance functions. Indeed, in our simulations the test of Theorem \ref{signflipeff} was substantially less anti-conservative (while having similar power, see Section \ref{powerflip}).

\begin{figure}
\centering
\begin{subfigure}{.5\textwidth}

\centering
\begin{tikzpicture}[scale=0.5]

\begin{axis}[
    anchor=origin,
                legend style={at={(7cm,2.5cm)}, anchor=west, legend columns=1, draw=none},
    width=12cm, height=12cm,
axis x line=bottom, xlabel={alpha}, xmin=0, xmax=1,  
       axis y line=left, ylabel={rejection probability}, ymin=0, ymax=1]

\draw[gray] (0,0) -- (12cm,12cm);

\addplot[dash pattern=on 9pt off 2pt , line width = 1, very thick,samples=500] table[x="alpha",y="par", col sep=comma,smooth]{rejProbs_g1_l1_rhoxz05_n50.csv}; 

\addplot[ dashed, line width = 1, very thick] table[x="alpha",y="GEE", col sep=comma]{rejProbs_g1_l1_rhoxz05_n50.csv};

\addplot[ line width = 0.5] table[x="alpha",y="flipSimple", col sep=comma, smooth]{rejProbs_g1_l1_rhoxz05_n50.csv};  

\addplot[ line width = 1, very thick  ] table[x="alpha",y="flipEff", col sep=comma, smooth]{rejProbs_g1_l1_rhoxz05_n50.csv};  

\legend{parametric, sandwich, flip basic, flip effective};
\end{axis}
\end{tikzpicture}

\caption{$n=50,\beta=0$}
\label{ignna}

\end{subfigure}
\begin{subfigure}{.5\textwidth}

\centering
\begin{tikzpicture}[scale=0.5]

\begin{axis}[
    anchor=origin,
                legend style={at={(12cm,.7)}, anchor=west, legend columns=1, draw=none},
    width=12cm, height=12cm,
axis x line=bottom, xlabel={alpha}, xmin=0, xmax=1,  
       axis y line=left, ylabel={rejection probability}, ymin=0, ymax=1]

\draw[gray] (0,0) -- (12cm,12cm);

\addplot[dash pattern=on 9pt off 2pt , line width = 1, very thick,samples=500] table[x="alpha",y="par", col sep=comma,smooth]{rejProbs_g1_l1_rhoxz05_n200.csv}; 

\addplot[ dashed, line width = 1, very thick] table[x="alpha",y="GEE", col sep=comma]{rejProbs_g1_l1_rhoxz05_n200.csv};

\addplot[ line width = 0.5] table[x="alpha",y="flipSimple", col sep=comma, smooth]{rejProbs_g1_l1_rhoxz05_n200.csv};  

\addplot[ line width = 1, very thick  ] table[x="alpha",y="flipEff", col sep=comma, smooth]{rejProbs_g1_l1_rhoxz05_n200.csv};

\end{axis}
\end{tikzpicture}

\caption{$n=200,\beta=0$}
\label{ignnb}

\end{subfigure}
\caption{Estimated rejection probabilities for four tests under misspecified variance, estimated nuisance and ignored nuisance. The null hypothesis was $H_0:\beta=0$.}
\label{ignn}
\end{figure}
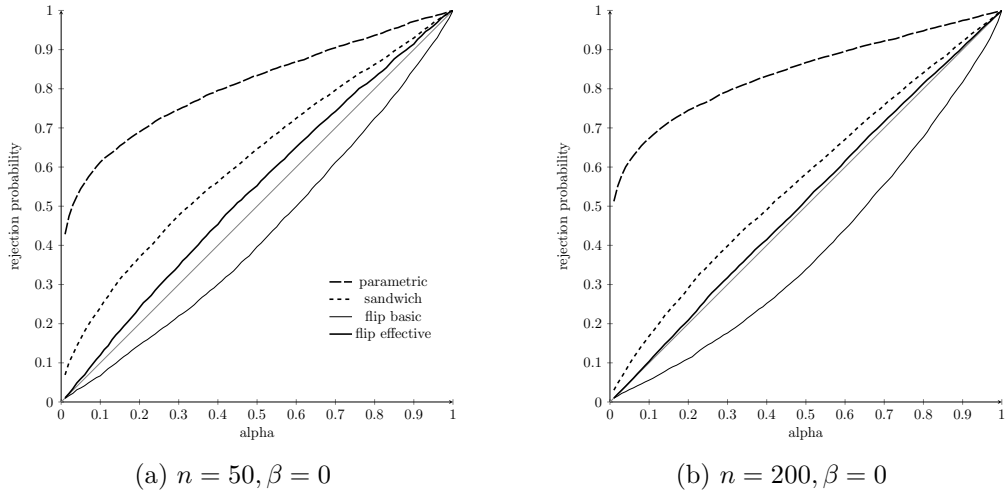

\subsection{Power} \label{powerflip}

For a meaningful power comparison of the four tests, we considered the scenario where the assumed model was correct, i.e.,
 the data distribution was Poisson and $\gamma_0^l$ was $0$.
See figure \ref{powerb02}. The estimated probabilities are based on $2\cdot 10^4$ simulation loops. 

Since  the model was correct, asymptotically there was no better choice than the parametric test. The sign-flipping test of Theorem \ref{signflipeff} had very similar power. 
The basic sign-flipping test was again conservative due to the estimation of $\gamma_0$.
The sandwich-based test had the most power, but  was anti-conservative (null behavior not shown).

\begin{figure}
\centering
\begin{subfigure}{.5\textwidth}

\centering
\begin{tikzpicture}[scale=0.5]

\begin{axis}[
    anchor=origin,
                legend style={at={(7cm,2.5cm)}, anchor=west, legend columns=1, draw=none},
    width=12cm, height=12cm,
axis x line=bottom, xlabel={alpha}, xmin=0, xmax=1,  
       axis y line=left, ylabel={rejection probability}, ymin=0, ymax=1]

\draw[gray] (0,0) -- (12cm,12cm);

\addplot[dash pattern=on 9pt off 2pt , line width = 1, very thick,samples=500] table[x="alpha",y="par", col sep=comma,smooth]{power1.csv};    

\addplot[ dashed, line width = 1, very thick] table[x="alpha",y="GEE", col sep=comma]{power1.csv};

\addplot[ line width = 0.5] table[x="alpha",y="flipSimple", col sep=comma, smooth]{power1.csv};  

\addplot[ line width = 1, very thick  ] table[x="alpha",y="flipEff", col sep=comma, smooth]{power1.csv};  

\legend{parametric, sandwich, flip basic, flip effective};
\end{axis}
\end{tikzpicture}

\caption{$\beta=0.2$, $n=50$}
\label{powerb02a}

\end{subfigure}
\begin{subfigure}{.5\textwidth}

\centering
\begin{tikzpicture}[scale=0.5]

\begin{axis}[
    anchor=origin,
                legend style={at={(12cm,.7)}, anchor=west, legend columns=1, draw=none},
    width=12cm, height=12cm,
axis x line=bottom, xlabel={alpha}, xmin=0, xmax=1,  
       axis y line=left, ylabel={rejection probability}, ymin=0, ymax=1]

\draw[gray] (0,0) -- (12cm,12cm);

\addplot[dash pattern=on 9pt off 2pt , line width = 1, very thick,samples=500] table[x="alpha",y="par", col sep=comma,smooth]{power2.csv}; 

\addplot[ dashed, line width = 1, very thick] table[x="alpha",y="GEE", col sep=comma]{power2.csv};

\addplot[ line width = 0.5] table[x="alpha",y="flipSimple", col sep=comma, smooth]{power2.csv};  

\addplot[ line width = 1, very thick  ] table[x="alpha",y="flipEff", col sep=comma, smooth]{power2.csv};

\end{axis}
\end{tikzpicture}

\caption{$\beta=0.2$, $n=200$}
\label{powerb02b}

\end{subfigure}
\caption{Power comparison of four two-sided tests under the correct model, with estimated nuisance. The null hypothesis was $H_0:\beta=0$.}
\label{powerb02}
\end{figure}
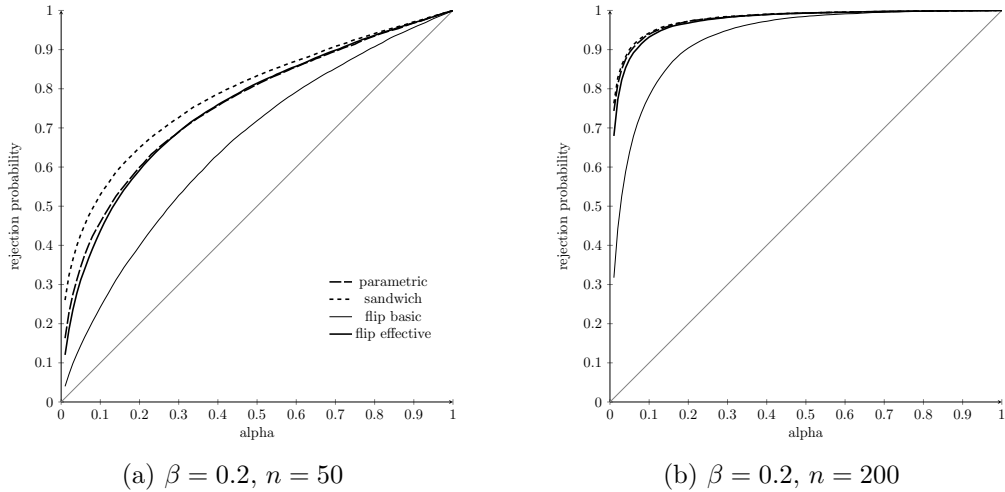

\subsection{Strong heteroscedasticity}
When a   Gaussian linear model is considered with $Y_i\sim N(\beta x_i,\sigma^2)$, $x_1=...=x_n=1$ and $H_0:\beta=0$, the score contributions are $\nu_i=X_i (Y_i-0)/\sigma^2=Y_i/\sigma^2 $. Thus the test of Theorem \ref{mainST} simply flips the observations $Y_i$, $1\leq i \leq n$. The parametric counterpart of this test is the one-sample t-test. 
The t-test needs to explicitly estimate the nuisance parameter $\sigma^2$;
 the sign-flipping test does not (simply substitute $\sigma=1$).

We simulated strongly heteroscedastic data:
 we took $Y_i\sim N(\beta x_i,\sigma_i^2)$, with $\sigma_i=\exp(i)$, $1\leq i \leq n=10.$
Consequently the t-statistic did not have the assumed distribution and  under $H_0$ the rejection probability  of the t-test was  far from the nominal level for most $\alpha$, see Figure \ref{heta}. The sign-flipping test did not need to estimate the variance. In this setting the test has  rejection probability  $\lfloor \alpha w \rfloor /w$ exactly if the transformations $g_1,...,g_w$ are drawn without replacement, since the observations are symmetric, see Proposition \ref{exacttest}. (We drew $g_1,...,g_w$ with replacement for convenience, but this gives almost the same test as drawing without replacement, due to the small probability of ties.)

For a meaningful power comparison, we considered the correct, homoscedastic model   with $\sigma_1=...=\sigma_{10}=1$.
Figure \ref{hetb}, based on $10^5$ repeated simulations, shows that the tests had virtually the same power.

\begin{figure}
\centering
\begin{subfigure}{.5\textwidth}

\centering

\begin{tikzpicture}[scale=0.5]

\begin{axis}[
    anchor=origin,
                legend style={at={(7cm,2.5cm)}, anchor=west, legend columns=1, draw=none},
    width=12cm, height=12cm,
axis x line=bottom, xlabel={alpha}, xmin=0, xmax=1,  
       axis y line=left, ylabel={rejection probability}, ymin=0, ymax=1]

\draw[gray] (0,0) -- (12cm,12cm);

\addplot[dash pattern=on 9pt off 2pt , line width = 1, very thick,samples=500] table[x="alpha",y="Parametric", col sep=comma,smooth]{rejProbs_ttest_H0_disp1_n10.csv}; 

\addplot[ dashed, line width = 1, very thick] table[x="alpha",y="Flip test", col sep=comma]{rejProbs_ttest_H0_disp1_n10.csv};

\legend{parametric test, flip test};
\end{axis}
\end{tikzpicture}

\caption{$\mu=0$, strong heteroscedasticity}

\label{heta}

\end{subfigure}
\begin{subfigure}{.5\textwidth}

\centering

\begin{tikzpicture}[scale=0.5]

\begin{axis}[
    anchor=origin,
                legend style={at={(7cm,2.5cm)}, anchor=west, legend columns=1, draw=none},
    width=12cm, height=12cm,
axis x line=bottom, xlabel={alpha}, xmin=0, xmax=1,  
       axis y line=left, ylabel={rejection probability}, ymin=0, ymax=1]

\draw[gray] (0,0) -- (12cm,12cm);

\addplot[dash pattern=on 9pt off 2pt , line width = 1, very thick,samples=500] table[x="alpha",y="Parametric", col sep=comma,smooth]{rejProbs_ttest_mu05_disp0_n10.csv}; 

\addplot[ dashed, line width = 1, very thick] table[x="alpha",y="Flip test", col sep=comma]{rejProbs_ttest_mu05_disp0_n10.csv};

\end{axis}
\end{tikzpicture}

\caption{$\mu=0.5$, correct model }

\label{hetb}

\end{subfigure}
\caption{Comparison of the one-sample t-test and the sign-flipping test. The null hypothesis was $H_0:\mu=0$.}
\label{het}
\end{figure}
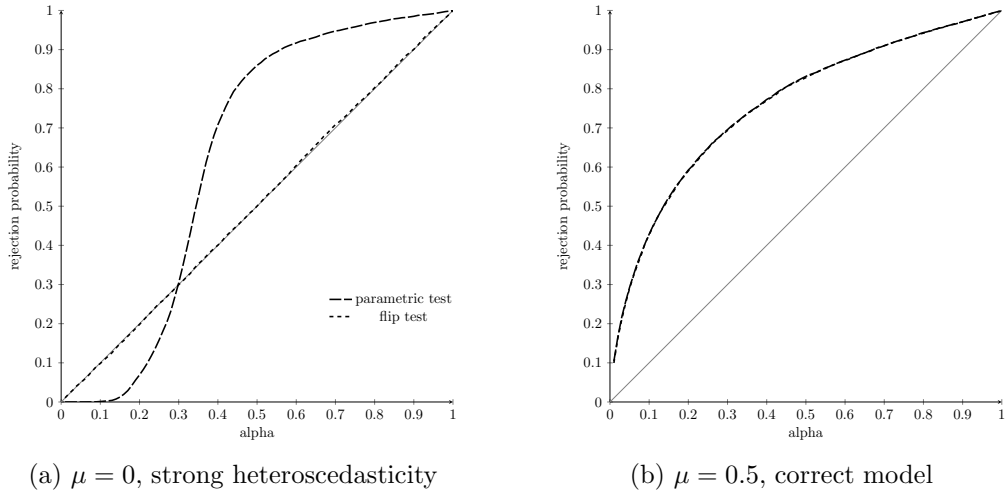

\subsection{Multi-dimensional parameter of interest} \label{simmv}

We considered the same setting as in Section \ref{secignn}, except that $\fbeta$ and the estimated nuisance parameter $\fgamma_0=(0.5,0.2,0,0,0)$ were $5$-dimensional (so $\bm{X}_i$, $\bm{Z}_i\in \mathbb{R}^5$). All corresponding covariates were correlated ($\rho=0.5$). 
There was an ignored nuisance covariate as before ($\gamma_0^l=0.5$), which was uncorrelated with the other covariates.  Thus there were in total 11 covariates. We took the overdispersion such that  $var(Y_i)=\mu_i+0.5 \mu_i^2$, i.e., the overdispersion again depended on the covariates (heteroscedasticity).

Instead of the basic score test  for a one-dimensional parameter we now used the multi-dimensional extension in Theorem \ref{mvknownn}. Similarly, instead of the test of Theorem \ref{signflipeff} based on effective scores, we used the multi-dimensional extension in Theorem \ref{mvunknownn}. We took $\hat{\bm{V}}=\bm{V}$ to be the identity matrix.

In Sections \ref{overdEstnuis} and \ref{secignn}  we compared our tests with a Wald test based on a sandwich estimate of $Var(\hat{\fbeta})$. Here we proceeded analogously, using  a sandwich estimate of the $5\times 5$  matrix $Var(\hat{\fbeta})$ in the multi-dimensional Wald test. This test uses that $\hat{\fbeta}'Var(\hat{\fbeta})^{-1}\hat{\fbeta}$ asymptotically has a $\chi^2_d$ distribution under the null hypothesis $H_0: \fbeta=\bm{0}$.

The results under $H_0$ are shown in Figure \ref{rejProbsMV}, where each plot is based on $10^4$ simulation loops. They are comparable to those in Section \ref{secignn}, except that the sandwich-based method is now even more anti-conservative. This is because $Var(\hat{\fbeta})$ is now a $5 \times 5$ matrix, which is difficult to estimate accurately. For $n=50$ and $\alpha=0.01$, the rejection probability of the sandwich-based method was $0.27$ instead of the required $0.01$.

For a meaningful power comparison of the four tests, we again considered the scenario where the assumed model was correct, i.e., the data distribution was Poisson and $\gamma_0^l$ was $0$. See Figure \ref{powerMV}, where each plot is based on $10^4$ simulation loops.   As usual, the sign-flipping test based on basic scores had low power due to nuisance estimation. 
The power of the sign-flipping test based on effective scores was comparable to that of the parametric score test. 
As in Section \ref{powerflip}, the test based on a sandwich estimate was the most powerful, but this has limited meaning, since it was also rather anti-conservative under the correct model (plot not shown). 

To conclude, sign-flipping provided much more reliable type-I error control than the sandwich approach, while giving satisfactory power (comparable to that of the parametric test, under the correct model).


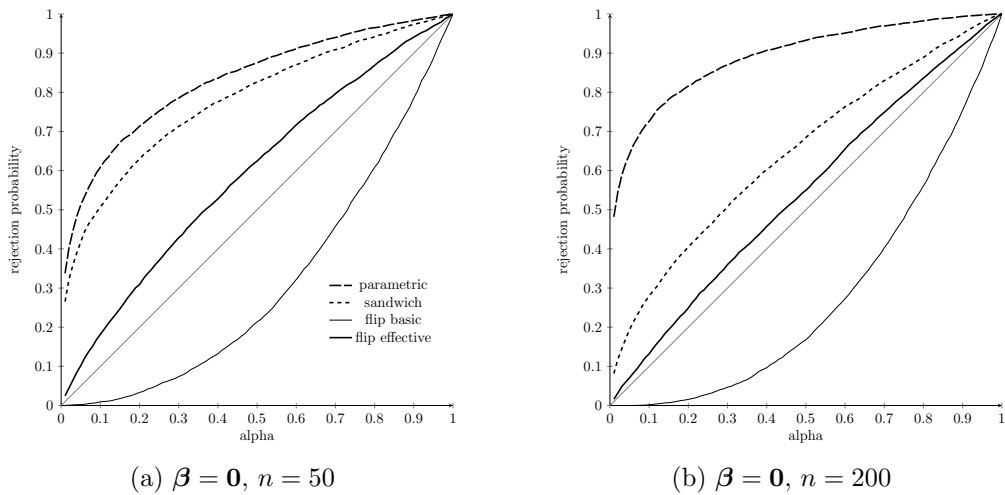
\begin{figure} 
\centering
\begin{subfigure}{.5\textwidth}

\centering
\begin{tikzpicture}[scale=0.5]

\begin{axis}[
    anchor=origin,
                legend style={at={(7cm,2.5cm)}, anchor=west, legend columns=1, draw=none},
    width=12cm, height=12cm,
axis x line=bottom, xlabel={alpha}, xmin=0, xmax=1,  
       axis y line=left, ylabel={rejection probability}, ymin=0, ymax=1]

\draw[gray] (0,0) -- (12cm,12cm);

\addplot[dash pattern=on 9pt off 2pt , line width = 1, very thick,samples=500] table[x="alpha",y="par", col sep=comma,smooth]{rejProbs1.csv}; 

\addplot[ dashed, line width = 1, very thick] table[x="alpha",y="GEE", col sep=comma]{rejProbs1.csv};

\addplot[ line width = 0.5] table[x="alpha",y="flipSimple", col sep=comma, smooth]{rejProbs1.csv};  

\addplot[ line width = 1, very thick  ] table[x="alpha",y="flipEff", col sep=comma, smooth]{rejProbs1.csv};  

\legend{parametric, sandwich, flip basic, flip effective};
\end{axis}
\end{tikzpicture}

\caption{$\fbeta=\bm{0}$, $n=50$}
\label{rejProbsMVa}

\end{subfigure}
\begin{subfigure}{.5\textwidth}

\centering
\begin{tikzpicture}[scale=0.5]

\begin{axis}[
    anchor=origin,
                legend style={at={(12cm,.7)}, anchor=west, legend columns=1, draw=none},
    width=12cm, height=12cm,
axis x line=bottom, xlabel={alpha}, xmin=0, xmax=1,  
       axis y line=left, ylabel={rejection probability}, ymin=0, ymax=1]

\draw[gray] (0,0) -- (12cm,12cm);

\addplot[dash pattern=on 9pt off 2pt , line width = 1, very thick,samples=500] table[x="alpha",y="par", col sep=comma,smooth]{rejProbs2.csv}; 

\addplot[ dashed, line width = 1, very thick] table[x="alpha",y="GEE", col sep=comma]{rejProbs2.csv};

\addplot[ line width = 0.5] table[x="alpha",y="flipSimple", col sep=comma, smooth]{rejProbs2.csv};  

\addplot[ line width = 1, very thick  ] table[x="alpha",y="flipEff", col sep=comma, smooth]{rejProbs2.csv};

\end{axis}
\end{tikzpicture}

\caption{$\fbeta=\bm{0}$, $n=200$}
\label{rejProbsMVb}

\end{subfigure}
\caption{ Estimated rejection probabilities under the null hypothesis. The model was misspecified due to overdispersion, heteroscedasticity and ignored nuisance.}
\label{rejProbsMV}
\end{figure}


\begin{figure} 
\centering
\begin{subfigure}{.5\textwidth}

\centering
\begin{tikzpicture}[scale=0.5]

\begin{axis}[
    anchor=origin,
                legend style={at={(7cm,2.5cm)}, anchor=west, legend columns=1, draw=none},
    width=12cm, height=12cm,
axis x line=bottom, xlabel={alpha}, xmin=0, xmax=1,  
       axis y line=left, ylabel={rejection probability}, ymin=0, ymax=1]

\draw[gray] (0,0) -- (12cm,12cm);

\addplot[dash pattern=on 9pt off 2pt , line width = 1, very thick,samples=500] table[x="alpha",y="par", col sep=comma,smooth]{power3.csv};   

\addplot[ dashed, line width = 1, very thick] table[x="alpha",y="GEE", col sep=comma]{power3.csv};

\addplot[ line width = 0.5] table[x="alpha",y="flipSimple", col sep=comma, smooth]{power3.csv};  

\addplot[ line width = 1, very thick  ] table[x="alpha",y="flipEff", col sep=comma, smooth]{power3.csv};  

\legend{parametric, sandwich, flip basic, flip effective};
\end{axis}
\end{tikzpicture}

\caption{$\fbeta=(0.2,0,0,0,0)'$, $n=50$}
\label{powerMVa}

\end{subfigure}
\begin{subfigure}{.5\textwidth}

\centering
\begin{tikzpicture}[scale=0.5]

\begin{axis}[
    anchor=origin,
                legend style={at={(12cm,.7)}, anchor=west, legend columns=1, draw=none},
    width=12cm, height=12cm,
axis x line=bottom, xlabel={alpha}, xmin=0, xmax=1,  
       axis y line=left, ylabel={rejection probability}, ymin=0, ymax=1]

\draw[gray] (0,0) -- (12cm,12cm);

\addplot[dash pattern=on 9pt off 2pt , line width = 1, very thick,samples=500] table[x="alpha",y="par", col sep=comma,smooth]{power4.csv};  

\addplot[ dashed, line width = 1, very thick] table[x="alpha",y="GEE", col sep=comma]{power4.csv};

\addplot[ line width = 0.5] table[x="alpha",y="flipSimple", col sep=comma, smooth]{power4.csv};  

\addplot[ line width = 1, very thick  ] table[x="alpha",y="flipEff", col sep=comma, smooth]{power4.csv};

\end{axis}
\end{tikzpicture}

\caption{$\fbeta=(0.2,0,0,0,0)'$, $n=200$}
\label{powerMVb}

\end{subfigure}
\caption{ Power comparison under the correct model. The null hypothesis was $H_0: \fbeta=\bm{0}$. }
\label{powerMV}
\end{figure}
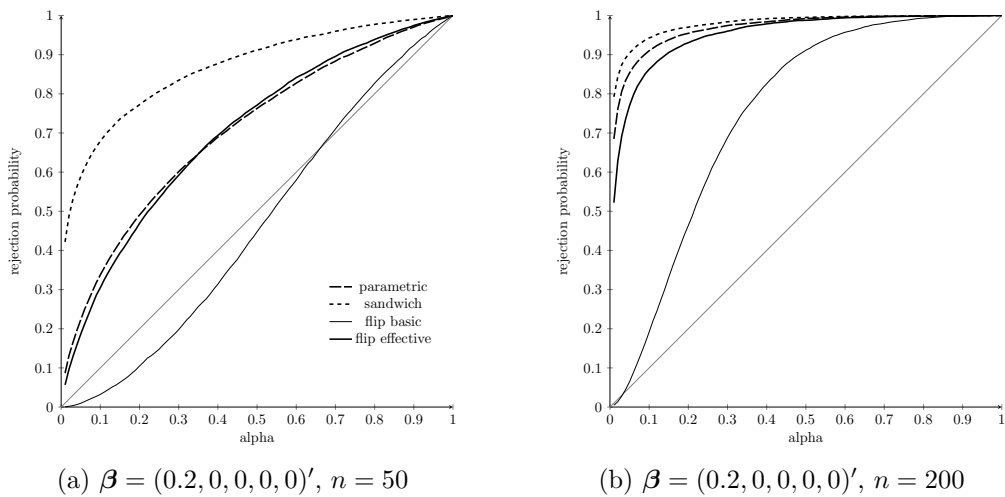

\section{Data analysis} \label{data}

We analyzed the dataset \emph{warpbreaks}. These data are  used in the example code of the \emph{gee} R package, available on CRAN.  
The dataset gives the number of warp breaks per loom, where a loom corresponds to a fixed length of yarn. There are 54 observations of 3 variables: the number of breaks, the type of wool (A or B) and the tension (low, medium or high). For each of the 6 possible combinations of wool and tension, there are 9 observations. Using various methods, we tested whether the number of breaks depends on the type of wool.

We first considered a basic Poisson model with
$$\log(\mu_i)= \gamma_1 + \beta \mathbbm{1}_{\{\text{wool}=B\}} + \gamma_2   \mathbbm{1}_{\{\text{tension}=\text{M}\}} + \gamma_3   \mathbbm{1}_{\{\text{tension}=\text{H}\}}.$$
The $\gamma_i$, $1\leq i \leq 3$, were nuisance parameters that were estimated using maximum likelihood.
We first tested $H_0:\beta=0$ using the parametric score test, obtaining a \emph{p}-value of $6.29 \cdot 10^{-5}$. (All tests performed were two-sided.) 

However, the data were clearly overdispersed: for each combination of wool and tension, the empirical variance of the 9 observations was substantially larger than the empirical mean. Thus the \emph{p}-value based on the parametric test had limited meaning. Fitting a quasi-Poisson model, which assumes constant overdispersion, gave a \emph{p}-value of 0.059.

As in Section \ref{simulations}, we also applied a Wald test, where we used a sandwich estimate \citep[][p. 280]{agresti2015foundations} of the variance of $\hat{\beta}$, to correct for the misspecified variance function. This resulted in a \emph{p}-value of $0.048.$

Further, we used the  sign-flipping test based on the basic scores $\nu_{\hat{\gamma},i}$, $i=1,...,54$ (still using the  basic Poisson model). We took $w=10^6$.
This resulted in a \emph{p}-value of $0.113$. This test is rather robust to model misspecification, but we know that it tends to be  conservative  when the score is correlated with the nuisance scores, as was the case here.

Finally, we performed the test of Theorem \ref{signflipeff} based on the effective score. 
This test is asymptotically exact under the correct model and has been shown to be robust against several forms of variance misspecification.
It provided a \emph{p}-value of $0.065$.

Based on this evidence, when maintaining a confidence level of $0.05$, it seems that we cannot reject $H_0$. Indeed, only the sandwich-based test provided a  \emph{p}-value below $0.05$, but this test is often anti-conservative, as discussed in Section \ref{overdEstnuis}.

\section*{Discussion}

We have proposed a test which relies on the assumption that individual score distributions are independent and have mean $0$ (in case of a point hypothesis) under the null. 
If the score contributions are misspecified due to overdispersion, heteroscedasticity or ignored nuisance covariates, then the traditional parametric tests lose their properties. 
The sign-flipping test is often robust to these types of misspecification and can still be asymptotically exact.

When nuisance parameters are estimated, the basic score contributions  become dependent. If a nuisance score is correlated with the score of the parameter of interest,
the estimation reduces the variance of the score, so that the sign-flipping test becomes conservative.
As a solution we propose to use the effective score, which is asymptotically the part of the score that is orthogonal to the nuisance score. 
The effective score is asymptotically unaffected by the  nuisance estimation, so that we again obtain an asymptotically exact test. 
We have proven that this is still the case when the scores and the Fisher information are misspecified by a constant, and simulations illustrate additional robustness.

When the parameter of interest is multi-dimensional, our test statistic involves a freely chosen matrix, which influences the power properties.
If this matrix is taken to be the inverse of the effective Fisher information and the assumed model is correct, then our test is asymptotically equivalent to the parametric score test.
Under the correct model, in certain situations our test is asymptotically equivalent to the global test \citep{goeman2006testing}, which is popular for testing hypotheses about high-dimensional parameters.

\appendix

\section{A lemma}

\begin{lemma} \label{mc}
Suppose that for $n\rightarrow \infty$, a vector $\bm{T}^n=(T_1^n,...,T_w^n)$ converges in distribution to a vector $\bm{T}$ of i.i.d. continuous variables. Then $\mathbb{P}(T_1^n>T_{[1-\alpha]}^n)\rightarrow \lfloor \alpha w\rfloor/w.$

\end{lemma}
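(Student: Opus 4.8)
The plan is to reduce the probability to the expectation of a fixed, bounded function of $\bm T^n$, pass to the limit with the continuous mapping theorem, and then evaluate the limiting probability by a symmetry argument.

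Write $m:=\lceil(1-\alpha)w\rceil$, so that $T_{[1-\alpha]}^n=T_{(m)}^n$ is the $m$-th order statistic of $T_1^n,\dots,T_w^n$. Then
\[
\mathbb{P}\big(T_1^n>T_{[1-\alpha]}^n\big)=\mathbb{E}\big[\psi(\bm T^n)\big],\qquad \psi(t_1,\dots,t_w):=\mathbbm{1}\{t_1>t_{(m)}\},
\]
where $t_{(m)}$ denotes the $m$-th smallest among $t_1,\dots,t_w$. The key observation is that $\psi$ is continuous at every point whose coordinates are pairwise distinct: a sufficiently small perturbation of such a point preserves both the ordering of the coordinates and the strict sign of $t_1-t_{(m)}$. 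Hence the set of discontinuities of $\psi$ is contained in $\bigcup_{i\neq j}\{t:t_i=t_j\}$.

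Next I would invoke the hypothesis that $\bm T^n$ converges in distribution to $\bm T=(T_1,\dots,T_w)$ with i.i.d.\ continuous components. Then $\mathbb{P}(T_i=T_j)=0$ for $i\neq j$, so $\bm T$ almost surely avoids the discontinuity set of $\psi$. The continuous mapping theorem (in the form: if $X_n$ converges in distribution to $X$, $g$ is bounded and measurable, and $X$ almost surely avoids the discontinuities of $g$, then $\mathbb{E}g(X_n)\to\mathbb{E}g(X)$) gives $\mathbb{E}[\psi(\bm T^n)]\to\mathbb{P}(T_1>T_{(m)})$.

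Finally, since the coordinates of $\bm T$ are almost surely pairwise distinct, the rank $R:=1+\#\{2\leq j\leq w:\,T_j<T_1\}$ of $T_1$ is well defined, and by exchangeability of i.i.d.\ variables all orderings are equally likely, so $R$ is uniform on $\{1,\dots,w\}$; moreover $\{T_1>T_{(m)}\}=\{R>m\}$ almost surely. Hence $\mathbb{P}(T_1>T_{(m)})=(w-m)/w$. Combining this with the elementary identity $\lceil(1-\alpha)w\rceil=w-\lfloor\alpha w\rfloor$, which holds because $w\in\mathbb{N}$, yields the claimed limit $\lfloor\alpha w\rfloor/w$. The only point requiring a little care is the tie-handling, i.e.\ confining the discontinuity set of $\psi$ to a $\mathbb{P}_{\bm T}$-null set so that the continuous mapping theorem applies; the rest is a one-line symmetry computation and a ceiling/floor identity.
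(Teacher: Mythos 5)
Your proof is correct and follows essentially the same route as the paper's: both reduce the probability to the expectation of an indicator whose discontinuity set lies in $\bigcup_{i\neq j}\{t_i=t_j\}$, which is null for the limit law of i.i.d.\ continuous variables, apply the continuous mapping theorem, and then evaluate the limiting probability by exchangeability. The only cosmetic difference is that you spell out the rank-uniformity computation where the paper cites the Monte Carlo testing principle of Lehmann and Romano.
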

\begin{proof}
Note that  $\mathbb{P}(T_1^n>T_{[1-\alpha]}^n)=\mathbb{P}(\bm{T}^n\in A)$, where 
$$A=\{(t_1,...,t_w)\in \mathbb{R}^w:  |  \{2\leq j \leq w: t_j<t_1\}   |  \geq \lceil (1-\alpha) w\rceil\}.$$

Let $\partial A$ be the boundary of $A$, i.e., the set of discontinuity points of $\mathbbm{1}_A$. 
Note that if $t\in \partial A$, then $t_i=t_j$ for some $1\leq i <j \leq w$.
It follows that $\mathbb{P}(\bm{T}\in \partial A)=0$. Since $\mathbbm{1}_A$ is continuous on $(\partial A)^c$, it follows from the continuous mapping theorem \citep[][Theorem 2.3]{van1998asymptotic} that $\mathbbm{1}_A(\bm{T}^n) \,{\buildrel d \over \rightarrow}\, \mathbbm{1}_A(\bm{T})$.

The elements of $\bm{T}$ are i.i.d. draws from the same distribution. Hence it follows from the Monte Carlo testing principle \citep{lehmann2005testing}  that under $H_0$, $\mathbb{P}(\bm{T}\in A)= \lfloor \alpha w \rfloor/w$. Thus $\mathbb{P}(\bm{T}^n\in A) \rightarrow \lfloor \alpha w \rfloor/w.$
\end{proof}

\section{Proofs of the results}

\noindent \textbf{Proof of Theorem \ref{mainST}}.
Suppose $H_0$ holds.
We will show that $\bm{T}^n=(T_1^n,...,T_w^n)$ converges in distribution to a multivariate normal distribution with mean $\bm{0}$ and variance $\lim_{n\rightarrow \infty} s^2_n \bm{I}$, where $\bm{I}$ is the $w\times w$ identity matrix.
It  then follows from Lemma \ref{mc} that $\mathbb{P}(T_1^n>T^n_{[1-\alpha]})\rightarrow \lfloor \alpha w \rfloor/w$.

Under $H_0$, for each $1\leq j \leq w$, $\mathbb{E}(T_j^n) =0$. For every  $1\leq j \leq w$, 
$var(T_j^n)=n^{-1}\sum_{i=1}^{n}var(\nu_i)= s^2_n$. 
Let $\bm{Q}_n$ be the covariance matrix of  $\bm{T}^n$. $\bm{Q}_n$ has zeroes off the diagonal. Indeed, for $1\leq j < k \leq w$, 
$$cov(T^n_j,T^n_k )=cov(n^{-1/2}\sum_{i=1}^n g_{ji}\nu_i,n^{-1/2}\sum_{i=1}^n g_{ki}\nu_i )=0,$$
since  the $g_{ki}$, $2\leq k \leq w$, are independent with mean $0$.
Hence $\bm{Q}_n$ converges to $\lim_{n\rightarrow\infty}s_n \bm{I}$. 
Note that $\bm{T}^n$ is a sum of $n$ vectors.
By the multivariate Lindeberg-Feller central limit theorem \citep{van1998asymptotic} $\bm{T}^n$ converges in distribution to a multivariate normal distribution with mean vector $\mathbf{0}$ and covariance matrix $\lim_{n\rightarrow\infty}s^2_n \bm{I}$. 

We have shown that $\bm{T}^n$ converges in distribution to a vector $\bm{T}$, say, of i.i.d. normal random variables. It now follows from Lemma \ref{mc} that $\mathbb{P}(T_1^n>T_{[1-\alpha]}^n)\rightarrow \lfloor \alpha w\rfloor/w.$ \qed

\phantom{.}
\noindent \textbf{Proof of Proposition \ref{exacttest}}. Note that $(\nu_1,...,\nu_n) \,{\buildrel d \over =}\, (g_{j1}\nu_1,...,g_{jn}\nu_n)$ for every $1 \leq j \leq w$.
This means that the test becomes a basic random transformation test and the results follow  from  the proof of Theorem 2  in \citet{hemerik2018false}.

\phantom{.}
\noindent \textbf{Proof of Theorem  \ref{signflipeff}}.
Suppose that $H_0$ holds.
Note that 
$$ S^*_{\hat{\fgamma}} = S_{\hat{\fgamma}}  -\Ima_{12}'\Ima_{22}^{-1} \fatS^{(k-1)}_{\hat{\fgamma}}=  S_{\hat{\fgamma}}  -\Imag_{12}'\Imag_{22}^{-1} \fatS^{(k-1)}_{\hat{\fgamma}} +o_{\mathbb{P}_{\beta_0,\fgamma_0}}(1)=$$
$$ S_{\fgamma_0} - \Imag_{12}' \sqrt{n}(\hat{\fgamma}-\fgamma_0) 
- \Imag_{12}' \Imag_{22}^{-1}\Big \{ \fatS^{(k-1)}_{\fgamma_0} - \Imag_{22} \sqrt{n}(\hat{\fgamma}-\fgamma_0) \Big\} +o_{\mathbb{P}_{\beta_0,\fgamma_0}}(1)=
$$
$$S^*_{\fgamma_0}+ o_{\mathbb{P}_{\beta_0,\fgamma_0}}(1).$$

Let $2\leq j \leq w$ and
$$S^{j+}_{{\fgamma}} =n^{-1/2}\sum_{i=1}^n \bm{1}_{\{g_{ji=1}\}} \nu_{{\fgamma},i}, \quad S^{j-}_{{\fgamma}} =n^{-1/2}\sum_{i=1}^n \bm{1}_{\{g_{ji=-1}\}} \nu_{{\fgamma},i}.$$
Note that $$ S^{j}_{\hat{\fgamma}}  = S^{j+}_{\hat{\fgamma}} - S^{j-}_{\hat{\fgamma}}=
 \Big \{S^{j+}_{{\fgamma_0}} - \frac{1}{2}\sqrt{n}\Imag_{12}'(\hat{\fgamma}- {\fgamma_0})\Big \}    - \Big\{ S^{j-}_{{\fgamma_0}}   - \frac{1}{2}\sqrt{n}\Imag_{12}'     (\hat{\fgamma}- {\fgamma_0}) \Big \}    +o_{\mathbb{P}_{\beta_0,\fgamma_0}}(1)= $$ 
$$S^{j+}_{{\fgamma_0}} - S^{j-}_{\hat{\fgamma}}+o_{\mathbb{P}_{\beta_0,\fgamma_0}}(1)=  S^{j}_{{\fgamma_0}}+o_{\mathbb{P}_{\beta_0,\fgamma_0}}(1). $$
The intuitive reason why $S^{j}_{\hat{\fgamma}}=S^{j}_{{\fgamma_0}}+o_{\mathbb{P}_{\beta_0,\fgamma_0}}(1)$, is that  the estimation of $\hat{\fgamma}$ does not cause the summands underlying $S^{j}_{\hat{\fgamma}}$ to be correlated.
Similarly we find that $\fatS^{(k-1),j}_{\hat{\fgamma}}= \fatS^{(k-1),j}_{{\fgamma_0}}+o_{\mathbb{P}_{\beta_0,\fgamma_0}}(1)$
and conclude that
$S^{*j}_{\hat{\fgamma}}= S^{*j}_{{\fgamma_0}}+o_{\mathbb{P}_{\beta_0,\fgamma_0}}(1).$

Let $\bm{T}^{n}$ be as in the proof of Theorem \ref{mainST}, with $\nu_i$ replaced by 
$\nu^*_{{\fgamma_0},i} .$
Suppose $H_0$ holds and $\Ima=\Imag$, so that  the summands underlying $T_j^{n}$ are independent.
For every $1\leq i\leq n$,  $\mathbb{E}(\nu^*_{{\fgamma_0},i})=0.$ 
The elements of $\bm{T}^{n}$ are uncorrelated  and  have common  variance $ var(\nu^*_{\fgamma_0,1}).$
By the multivariate  central limit theorem \citep{van1998asymptotic, greene2012econometric}, $\bm{T}^{n}$ converges in distribution to $N(\bm{0},  var(\nu^*_{\fgamma_0,1}) \bm{I} )$. We supposed that $\Ima=\Imag$ to use the central limit theorem, but the asymptotic distribution of $\bm{T}^{n}$ is the same if $\Ima$ is any  consistent estimator of $\Imag$.

Let $\hat{\bm{T}}^{n}$ be as in the proof of Theorem \ref{mainST}, with $\nu_i$ replaced by $\nu^*_{\hat{\fgamma},i}.$
For every $1\leq j \leq w,$ $S^{*j}_{\hat{\fgamma}}= S^{*j}_{\fgamma_0}+o_{\mathbb{P}_{\beta_0,\fgamma_0}}(1)$. Thus $\hat{\bm{T}}^{n}$ and $\bm{T}^{n}$ are asymptotically equivalent. The result now follows from Lemma \ref{mc}.  \qed

\phantom{.}
\noindent \textbf{Proof of Proposition \ref{aseqeff}}.
 For $2\leq j\leq w$ consider
$$S^{*j+}_{\hat{\fgamma}} =n^{-1/2}\sum_{i=1}^n \bm{1}_{\{g_{ji=1}\}} \nu^*_{\hat{\fgamma},i},\quad  S^{*j-}_{\hat{\fgamma}} =n^{-1/2}\sum_{i=1}^n \bm{1}_{\{g_{ji=-1}\}} \nu^*_{\hat{\fgamma},i}$$
$$S^{(k-1),j+}_{\hat{\fgamma}}=n^{-1/2}\sum_{i=1}^n \bm{1}_{\{g_{ji=1}\}} \nu^{(k-1)}_{\hat{\fgamma},i}, \quad S^{(k-1),j-}_{\hat{\fgamma}}=n^{-1/2}\sum_{i=1}^n \bm{1}_{\{g_{ji=-1}\}} \nu^{(k-1)}_{\hat{\fgamma},i}.$$
We have 
$$S^{*j+}_{\hat{\fgamma}} = S^{j+}_{\hat{\fgamma}} - \Ima_{12}' \Ima_{22}^{-1} S^{(k-1),j+}_{\hat{\fgamma}}= $$ 
$$S^{j+}_{{\fgamma_0}}-\frac{1}{2}\sqrt{n} \Imag_{12}' (\hat{\fgamma}- {\fgamma_0})-  \Imag_{12}' \Imag_{22}^{-1}\big\{  S^{(k-1),j+}_{{\fgamma_0}}-\frac{1}{2}\sqrt{n} \Imag_{22} (\hat{\fgamma}- {\fgamma_0})    \big\}+o_{\mathbb{P}_{\beta^n,\fgamma_0}}(1)= $$ 
$$S^{*j+}_{{\fgamma_0}}+o_{\mathbb{P}_{\beta^n,\fgamma_0}}(1)$$ 
 and analogously  $S^{*j-}_{\hat{\fgamma}} =S^{*j-}_{{\fgamma_0}}+o_{\mathbb{P}_{\beta^n,\fgamma_0}}(1)$.
By \citet[][p. 341]{marohn2002comment}, for $2\leq j \leq w$, $S^{*j+}_{{\fgamma_0}}$ and  $S^{*j-}_{{\fgamma_0}}$ have an asymptotic $N(\frac{1}{2}\xi\sigma_0^2,\frac{1}{2}\sigma_0^2)$ distribution. 
 Since they are independent, it follows that  $T_j^n=S^{*j+}_{\hat{\fgamma}}-S^{*j-}_{\hat{\fgamma}}$ has an asymptotic $N(0,\sigma_0^2)$ distribution, $2\leq j \leq w$. With  the multivariate central limit theorem we find that  $(T_2^n,...,T_w^n)$ converges in distribution to a vector of $w-1$  i.i.d.  $N(0,\sigma_0^2)$ variables as $n\rightarrow\infty$.

Let $\epsilon$, $\epsilon'>0$. 
Let $(T_1',...,T_w')$ have the asymptotic distribution of $(T_1^n,...,T_w^n)$.
Let $(T_1'',...,T_w'')$ be a vector of  $w$ i.i.d.  $N(0,\sigma_0^2)$ variables. Apart from the first element, these two vectors have the same distribution.
For $w\in\{2,3,...\}$, define $T^{[w]}_{[1-\alpha]}$ like $T^n_{[1-\alpha]}$, but based on the values  $T_1',...,T_{w}'$ instead of $T_1^n,...,T_w^n$.
Also define $T^{[[w]]}_{[1-\alpha]}$ like $T^n_{[1-\alpha]}$, but based on  the values $T_1'',...,T_{w}''$.
Note that as $w\rightarrow\infty$, the empirical quantile $T^{[[w]]}_{[1-\alpha]}$ converges in distribution to the constant $\sigma_0\Phi(1-\alpha)$. Further note that for $w\rightarrow\infty$, $T^{[[w]]}_{[1-\alpha]}-T^{[w]}_{[1-\alpha]}$ converges in distribution to $0$.
Thus there is a $W  \in\mathbb{N}$ such that for all $w>W$, 
\begin{equation}\label{P1}
\mathbb{P}(|T^{[w]}_{[1-\alpha]}-\sigma_0\Phi(1-\alpha)|<\epsilon')>1-\epsilon.
\end{equation}

Since the distribution of $(T_1^n,...,T_w^n)$ converges to the distribution of $(T_1',...,T_w')$  as $n\rightarrow\infty$,  
\begin{equation}\label{convdistr}
T^n_{[1-\alpha]}    \xrightarrow[]{d}   T^{[w]}_{[1-\alpha]}
\end{equation}
 as $n\rightarrow\infty$.  
Since in the present proof $w$ is not fixed, we will write $T^n_{[1-\alpha]}=T^{n,w}_{[1-\alpha]}$. 
By results \eqref{P1} and \eqref{convdistr}, for  $w>W$, $\liminf_{n\rightarrow\infty}\mathbb{P}(|T^{n,w}_{[1-\alpha]}-\sigma_0\Phi(1-\alpha)|<\epsilon')>1-\epsilon$.
Thus $\lim_{w\rightarrow\infty}\liminf_{n\rightarrow\infty}\mathbb{P}(|T^{n,w}_{[1-\alpha]}-\sigma_0\Phi(1-\alpha)|<\epsilon')=1$.

The distribution of $T_1^{n}$, which does not depend on $w$, 
converges to a continuous distribution as $n\rightarrow\infty$.
It follows that for every $\epsilon'' >0$, there is an $W'$ such that there is a $N$ such that for all $w>W'$ and $n>N$,
$\mathbb{E}\big( | \mathbbm{1}_{\{ T_1^n>T^{n,w}_{[1-\alpha]} \}} -\mathbbm{1}_{\{ T_1^n>\sigma_0\Phi(1-\alpha) \}}  |   \big)<\epsilon''$.
This means that $\lim_{w\rightarrow\infty}\liminf_{n\rightarrow\infty}\mathbb{E}\big( | \mathbbm{1}_{\{ T_1^n>T^{n,w}_{[1-\alpha]} \}} -\mathbbm{1}_{\{ T_1^n>\sigma_0\Phi(1-\alpha) \}}  |   \big)=0$, as was to be shown.    \qed

\phantom{.}
\noindent \textbf{Proof of Proposition \ref{missp}}.
For every $1\leq j \leq w$ we have 
$$\tilde{S}^{*j}_{\hat{\fgamma}}=  c_1 S^j_{\hat{\fgamma}}-c_2\Ima_{12}' c_2^{-1} \Ima_{22}^{-1} c_1\mathbf{S}^{(k-1),j}_{\hat{\fgamma}}= c_1 S^{*j}_{\hat{\fgamma}}.$$ Hence the test is identical to that of Theorem \ref{signflipeff}, since that test is unchanged if all $T_j^n$, $1\leq j\leq w$, are multiplied by the same constant.   \qed

\phantom{.}
\noindent \textbf{Proof of Theorem \ref{mvknownn}}.
Suppose $H_0$ holds. Consider the $d\times j$-matrix
\begin{equation} \label{Sjmv}
\Big ( n^{1/2} \sum_{i=1}^n g_{ji} \fnu_{\gamma_0,i}\Big)_{1\leq j \leq w}.
\end{equation}
It follows from the multivariate central limit theorem \citep{van1998asymptotic} that, as $n\rightarrow \infty$,  this matrix converges in distribution to a matrix with 
identically distributed columns which are independent of each other.
Note that for every $1\leq j \leq w$, $T_j^n$ is a function of the $j$-th column of the matrix \eqref{Sjmv}.
Thus, with the continuous mapping theorem \citep[][Theorem 2.3]{van1998asymptotic} it follows that $(T_1^n,...,T_j^n)$ also  converges in distribution to a vector with continuous i.i.d. elements.  The result now follows from Lemma \ref{mc}.   \qed

\phantom{.}
\noindent \textbf{Proof of Theorem \ref{mvunknownn}}.
Consider the case $\hat{\fgamma}=\fgamma_0$. 
As in the proof of Theorem \ref{mvknownn}, under $H_0$, $(T_1^n,...,T_w^n)$ converges in distribution to a vector  of $w$ i.i.d. variables.
As in the proof of Theorem  \ref{signflipeff}, the same is true if we take $\hat{\fgamma}$ to be a different $\sqrt{n}$-consistent estimator of $\fgamma_0$. (Again, the reason is that the effective score based on $\hat{\fgamma}$ is asymptotically equivalent to the effective score based on $\gamma_0$.)
The result now follows from Lemma \ref{mc} again.  \qed

\phantom{.}
\noindent \textbf{Proof of Proposition \ref{aseqmv}}.
By \citet{hall1990large},  $n^{-1/2}\sum_{i=1}^n \fnu_{\hat{\fgamma},i}^{*}$ has an asymptotic $N(\bm{0},\Imag^*)$ distribution under $\bm{\beta}=\bm{\beta}_0$.
 Analogously to  the one-dimensional case at Proposition \ref{aseqeff}, for $2\leq j \leq w$, the vector $n^{-1/2}\sum_{i=1}^n g_{ji}\fnu_{\hat{\fgamma},i}^{*}$  is asymptotically the difference of two mutually  independent  $N(\frac{1}{2}\Imag^* \bm{\xi} , \frac{1}{2}\Imag^*)$  vectors \citep{hall1990large},  so that it also  
 has an  asymptotic $N(\bm{0},\Imag^*)$ distribution (under $\fbeta=\fbeta^n$). 
As in the proof of Theorem \ref{mvknownn}, by the multivariate central limit theorem, the $d \times (w-1)$ matrix 
 $\big (n^{-1/2}\sum_{i=1}^n g_{ji}\fnu_{\hat{\fgamma},i}^{*} \big )_{2\leq j \leq w}$ converges to a   matrix with $w-1$ independent $N(\bm{0},\Imag^*)$ columns as $n\rightarrow\infty $. Hence, by the continuous mapping theorem, as $n\rightarrow\infty$, $(T_2^n,...,T_w^n)$ converges in distribution to a vector of $w-1$ i.i.d. variables (under $\fbeta=\fbeta^n$), which follow the asymptotic distribution which $T_1^n$ has under $\bm{\beta}=\bm{\beta}_0$.

The result now follows as at the end of the proof of Proposition \ref{aseqeff}. \qed

\setlength{\bibsep}{3pt plus 0.3ex}  
\def\bibfont{\small}  

\bibliographystyle{biblstyle}
\bibliography{references}

\end{document}